\title{\LARGE \bf 
Towards Resilience for Multi-Agent $QD$-Learning
}
\author{Yijing Xie, Shaoshuai Mou, and Shreyas Sundaram*
\thanks{Y. Xie is a Lillian Gilbreth Postdoctoral Fellow at the College of Engineering, S. Mou is with the School of Aeronautics and Astronautics and S. Sundaram is with the School of Electrical and Computer Engineering, Purdue University, West Lafayette, IN 47907 USA (e-mail: xie382@purdue.edu, mous@purdue.edu, sundara2@purdue.edu).}
}
\DeclareMathOperator*{\argmin}{arg\,min}
\newtheorem{defn}{Definition}
\newtheorem{thm}{Theorem}
\newtheorem{lem}{Lemma}
\newtheorem{prop}{Proposition}
\newtheorem{assum}{Assumption}
\newtheorem{remark}{Remark}
\def\T{^{\top}}
\def\ie{{\rm i.e.}}
\def\as{{\rm a.s.}}
\def\nn{\nonumber}
\begin{document}

\maketitle
\thispagestyle{plain}
\pagestyle{plain}

\begin{abstract}
 This paper considers the multi-agent reinforcement learning (MARL) problem for a networked (peer-to-peer) system in the presence of Byzantine agents.  We build on an existing distributed $Q$-learning algorithm, and allow certain agents in the network to behave in an arbitrary and adversarial manner (as captured by the Byzantine attack model). Under the proposed algorithm, if the network topology is $(2F+1)$-robust and up to $F$ Byzantine agents exist in the neighborhood of each regular agent, we establish the almost sure convergence of all regular agents' value functions to the neighborhood of the optimal value function of all regular agents. For each state, if the optimal $Q$-values of all regular agents corresponding to different actions are sufficiently separated, our approach allows each regular agent to learn the optimal policy for all regular agents.
\end{abstract}

\section{Introduction}

In multi-agent reinforcement learning (MARL), multiple agents observe the outcome of interactions with an environment, and use those observations to learn optimal control policies to achieve long-term goals. By working cooperatively, agents are able to optimize a common long-term reward which is an aggregate of all agents' private rewards \cite{kar2013cal,zhang2018fully,qu2020scalable,lin2020distributed}. The authors of \cite{kar2013cal} approach the MARL problem by a distributed $Q$-learning algorithm, in which each agent maintains a $Q$-value estimate for every state-action pair. The convergence of the $Q$-value estimates to the optimal $Q$ values is guaranteed. Subsequently, \cite{zhang2018fully} proposes actor-critic algorithms with convergence guarantees using linear functions to parameterize $Q$-value estimates. Each agent shares its parameter instead of $Q$-value estimates to its neighbors.
By exploiting the network structure, \cite{qu2020scalable} proposes a scalable actor-critic algorithm where each agent maintains $Q$-value estimates only for state-action pairs within its multi-hop neighbors. This result has been further extended in \cite{lin2020distributed} to the case of time-varying networks.

Algorithms for multi-agent systems are typically robust against benign failures of individual agents as long as the underlying network is connected. However, the dependence of these algorithms on local coordination among neighbors also raises a major security concern that the presence of one or more malicious agents under cyberattacks could compromise the entire algorithm \cite{sundaram2018distributed}. It is thus imperative to develop algorithms that are \emph{resilient}, which refers to algorithms' ability to withstand the compromise of a subset of the agents and still ensure some notion of correctness \cite{leblanc2013resilient}. Resilient algorithms against various types of attackers for networked systems have been proposed for different problems such as consensus \cite{pasqualetti2012consensus,leblanc2013resilient,wang2019resilient}, distributed optimization \cite{sundaram2018distributed,zhao2019resilient,kuwaranancharoen2020byzantine} and distributed learning \cite{chen2017distributed,blanchard2017byzantine,yin2018byzantine,yang2019byrdie,yang2019bridge}. Within the class of resilient distributed learning algorithms, some papers assume a client-server architecture where a central agent collects information from all other agents and broadcasts new information back to other agents\cite{chen2017distributed,blanchard2017byzantine,yin2018byzantine}.
Other algorithms such as ByRDiE in \cite{yang2019byrdie} and BRIDGE in \cite{yang2019bridge} are designed based on the peer-to-peer (P2P) architecture, where there is no central agent to coordinate all other agents, and all agents exchange information with neighbors. Very recently, resilient algorithms for MARL in the presence of Byzantine agents are proposed in \cite{lin2020toward} and \cite{wu2020byzantine}. Specifically, \cite{lin2020toward} considers the fully cooperative MARL problem for a networked system in the client-server architecture with a reliable central agent. The paper \cite{wu2020byzantine} considers the policy evaluation problem in the P2P architecture. By assuming a bounded reward variation between the local reward of each agent and the global averaged reward of all agents, they obtain a learning error, which is related to the bound of the reward variation, network structure and discounting factor.

In this paper, we propose a resilient $QD$-learning algorithm for a networked system in the presence of Byzantine agents. The main motivation is that the $QD$-learning algorithm generally fails even in the presence of a single adversarial agent. We first extend the distributed $Q$-learning algorithm for undirected networks \cite{kar2013cal} to time-varying directed networks. We then build on that to create a resilient $QD$-learning that is capable of tolerating Byzantine attacks. For each regular agent, we establish the almost sure convergence of the value function to the neighborhood of the optimal value function of all regular agents under certain conditions on the graph topology. For each state, we show that if the optimal $Q$-values corresponding to different actions are sufficiently separated, each regular agent can learn the optimal policy for all regular agents.

\textit{Organization:}  The fully cooperative MARL problem for a networked system is formulated in Section \ref{sec: problem formulation}. The extension of the $QD$-learning algorithm to a time-varying directed network is presented in Section \ref{sec: directed QD}. In section \ref{sec: resilient QD}, we first characterize limitations on the performance of the $QD$-learning algorithm in the presence of adversaries. Then we introduce a new resilient $QD$-learning algorithm and provide the main result. We conclude the paper in Section \ref{sec: conclusion}.

\textit{Notation:} Let $\mathbb{N}$ denote the set of all natural numbers and $\mathbb{R}$ the set of all real values. Let $\mathbb{R}^k$ denote the $k$-dimensional Euclidean space. Throughout, the probability space $(\Omega,\mathcal{F})$ supports all random objects. For a collection $\mathcal{J}$ of random objects, $\sigma(\mathcal{J})$ is the smallest $\sigma$-algebra with respect to which all the random objects in $\mathcal{J}$ are measurable. Probability and expectation on $(\Omega,\mathcal{F})$ are denoted by $\mathbb{P}(\cdot)$ and $\mathbb{E}(\cdot)$, respectively. All inequalities involving random objects are interpreted almost surely (\as).

\section{Problem Formulation}
\label{sec: problem formulation}

Consider a networked system consisting of $N$ agents, in which each agent can only communicate with certain other agents called \emph{neighbors}. The inter-agent communication network is represented by a time-invariant graph $\mathcal{G}=(\mathcal{V},\mathcal{E})$. Here $\mathcal{V}=\{v_1,v_2,\cdots,v_N\}$ denotes the node set with each node representing an agent; $\mathcal{E}\subset\mathcal{V}\times\mathcal{V}$ denotes a set of edges corresponding to the neighbor relations. A graph is said to be undirected if $(v_n,v_l)\in\mathcal{E}\Leftrightarrow(v_l,v_n)\in\mathcal{E}$, and directed otherwise. The neighbor set of $v_n$ is denoted by $\mathcal{N}_n=\{v_l\in\mathcal{V}|(v_l,v_n)\in\mathcal{E}\}$. 

Let $\{\mathbf{x}_t\}$ be a controlled Markov chain taking values in a finite state space $\mathcal{X}=\{1,2,\cdots,M\}$, and $\mathcal{U}$ be the finite set of control actions. The state transition is governed by \[\mathbb{P}(\mathbf{x}_{t+1}=j|\mathbf{x}_t=i,\mathbf{u}_t=u)=p_{ij}^u, \;\forall i,j\in\mathcal{X}, u\in\mathcal{U},\] where $\sum_{j\in\mathcal{X}}p_{ij}^u=1$ for all $i\in\mathcal{X}$. The private information $c_{n}(i,u)$ is the random one-stage cost of agent $v_n$ when control $u$ is applied at state $i$. A stationary control policy $\pi$ is a mapping from $\mathcal{X}$ to $\mathcal{U}$, where $\{\mathbf{u}_t\}$ satisfies $\mathbf{u}_t=\pi(\mathbf{x}_t)$. For a stationary policy $\pi$, the state process $\{\mathbf{x}_t^{\pi}\}$ evolves as a homogeneous Markov chain with $\mathbb{P}(\mathbf{x}_{t+1}^\pi=j|\mathbf{x}_t^\pi=i)=p_{ij}^{\pi(i)}$. For a stationary policy $\pi$ and initial state $i$ of the process $\{\mathbf{x}_t^\pi\}$, the infinite horizon discounted cost of agent $v_n$ is 
\[ V_{i,\pi}^n=\limsup\limits_{T\to\infty}\mathbb{E}\bigg[\sum_{t=0}^T\gamma^tc_n(\mathbf{x}_t^\pi,\pi(\mathbf{x}_t^\pi))|\mathbf{x}_0^\pi=i\bigg],  \]
where $\gamma\in(0,1)$ is the discounting factor. 

In the situation where all agents are reliable (\ie, non-adversarial), the $QD$-learning algorithm in \cite{kar2013cal} ensures that each agent eventually learns the optimal value function of all agents $\mathbf{V}^*=[V_1^*\;V_2^*\;\cdots\;V_M^*]\T$ and the associated optimal policy $\pi^*$ with 
\[V_i^*=\inf\limits_\pi \frac{1}{N}\sum_{v_n\in\mathcal{V}} V_{i,\pi}^n,\;\forall i\in\mathcal{X}.\]

In this paper, we consider the problem in the presence of adversarial agents. The node set $\mathcal{V}$ is partitioned into a set of regular nodes $\mathcal{R}$ and a set of adversarial nodes $\mathcal{A}=\mathcal{V}\setminus\mathcal{R}$ which is unknown a priori to the regular nodes.
It is generally impossible to learn $\mathbf{V}^*$ in the presence of  adversaries (as we will show later), since their local costs can never be accurately inferred. Instead, we will design a resilient $QD$-learning algorithm to approximately learn the optimal value function of all regular agents
$\mathbf{V}^{\mathcal{R}*}=[V_1^{\mathcal{R}*}\;V_2^{\mathcal{R}*}\;\cdots\;V_M^{\mathcal{R}*}]\T$
and the associated optimal policy $\pi^{\mathcal{R}*}$ with
\[V_i^{\mathcal{R}*}=\inf_{\pi}\frac{1}{|\mathcal{R}|}\sum_{v_n\in\mathcal{R}}V_{i,\pi}^n,\;\forall i\in\mathcal{X}.\]

\begin{remark}
A significant challenge in MARL settings where the agents themselves apply inputs is that the inputs applied by each agent will affect the state, but may not be visible to other agents. To deal with this, the majority of existing work assumes either that the inputs applied by all agents are globally visible \cite{kar2013cal,zhang2018fully,lin2020toward}, that there is a global controller \cite{kar2013cal}, or that there are no inputs at all \cite{wu2020byzantine}, with limited exceptions \cite{zhang2018fully,lin2020toward}.  In settings where agents may be adversarial (as in our work), the issue of agents applying inputs themselves incurs additional complexity, as the adversarial agents' inputs can no longer be easily predicted.  In this paper, we thus make the assumption of a global controller (whose actions are visible to all agents) in order to focus on the issue of resiliently learning the optimal policy; as we will see, there are significant challenges even in the setting with a global controller.

\end{remark}

\section{$QD$-Learning for time-varying directed networks}
\label{sec: directed QD}

In order to develop our resilient $QD$-learning algorithm, we will first need to extend the $QD$-learning algorithm for undirected networks in \cite{kar2013cal}  to time-varying directed networks (in the absence of adversaries); we will thus do this in this section. Consider an underlying graph $\mathcal{G}(t)=(\mathcal{V},\mathcal{E}(t))$ that is time-varying, where $\mathcal{E}(t)\subset\mathcal{V}\times\mathcal{V}$ is the set of edges at time $t$. At time $t$, each agent $v_n$ can obtain information from each neighbor $v_l\in\mathcal{N}_n(t)$, where  $\mathcal{N}_n(t)=\{v_l\in\mathcal{V}|(v_l,v_n)\in\mathcal{E}(t)\}$ is the neighbor set of $v_n$ at time $t$. 

Each agent $v_n\in\mathcal{V}$ maintains a $\mathbb{R}^{|\mathcal{X}\times\mathcal{U}|}$-valued sequence $\{\mathbf{Q}_t^n\}$ with components $Q_{i,u}^n(t)$ and a $\mathbb{R}^{|\mathcal{X}|}$-valued sequence $\{\mathbf{V}_t^n\}$ with components $V_{i}^n(t)$ successively refined as 
\begin{equation}
\label{eq:V}
V_{i}^n(t)=\min_{u\in\mathcal{U}}Q_{i,u}^n(t),\;i=1,2,\cdots,M.  
\end{equation}
Extending the $QD$-learning algorithm from \cite{kar2013cal}, the sequence $\{Q_{i,u}^n(t)\}$ for each state-action pair $(i,u)$ evolves as follows: 
\begin{multline}
\label{eq:QD}
Q_{i,u}^n(t+1)=\\Q_{i,u}^n(t)-\beta_{i,u}(t)\sum_{v_l\in\mathcal{N}_n(t)}\Big(Q_{i,u}^n(t)-Q_{i,u}^l(t)\Big)\\
+\alpha_{i,u}(t)\Big(\!c_n(\mathbf{x}_t,\mathbf{u}_t)\!+\!\gamma \min_{v\in\mathcal{U}}Q_{\mathbf{x}_{t+1},v}^n(t)\!-\!Q_{i,u}^n(t)\!\Big),
\end{multline}
where
\begin{eqnarray}
\label{eq:alpha directed}
\alpha_{i,u}(t)\!\!\!\!&=&\!\!\!\!\left\{
\begin{array}{ll}
a_k, & \mbox{if}\;t=T_{i,u}(k)\;\mbox{for some}\;k\ge 0,\\
\;0, & \mbox{otherwise},\\    
\end{array}\right.\\
\label{eq:beta directed}
\beta_{i,u}(t)\!\!\!\!&=&\!\!\!\!\left\{
\begin{array}{ll}
b, & \mbox{if}\;t=T_{i,u}(k)\;\mbox{for some}\;k\ge 0,\\
\;0, & \mbox{otherwise},
\end{array}\right.
\end{eqnarray}
with $T_{i,u}(k)$ being the $k+1$-th sampling instant of state-action pair $(i,u)$, $a_k\in (0,\eta]$ and $b\in\left[\eta,\frac{1-\eta}{N-1}\right)$ satisfying $\lim\limits_{k\to\infty}a_k=0$, $\sum_{k\ge 0}a_k=\infty$ and $\lim\limits_{k\to\infty}\frac{a_{k-1}}{a_k}=1$, for some constant $\eta\in(0,\frac{1}{N}]$.

\begin{remark}
The update of $Q$-value estimate (\ref{eq:QD}) consists of an innovation term and a consensus term. The innovation term $c_n(\mathbf{x}_t,\mathbf{u}_t)+\gamma \min_{v\in\mathcal{U}}Q_{\mathbf{x}_{t+1},v}^n(t)-Q_{i,u}^n(t)$ is the local $Q$-learning portion. The consensus term $\sum_{v_l\in\mathcal{N}_n(t)}(Q_{i,u}^n(t)-Q_{i,u}^l(t))$ is designed to force all agents to reach consensus on their $Q$-value estimates. 
\end{remark}

\begin{assum}
\label{assum: measurability}
The probability space $(\Omega,\mathcal{F},\mathbb{P})$ is a complete probability space with filtration $\{\mathcal{F}_t\}$ given by $\mathcal{F}_t=\sigma(\{\mathbf{x}_s,\mathbf{u}_s\}_{s\le t},\{c_n(\mathbf{x}_t,\mathbf{u}_t)\}_{v_n\in\mathcal{V},s<t})$. The conditional probability for the controlled transition of $\{\mathbf{x}_t\}$ is $\mathbb{P}(\mathbf{x}_{t+1}=j|\mathcal{F}_t)=p_{\mathbf{x}_tj}^{\mathbf{u}_t}$. For each $v_n$, $\mathbb{E}[c_n(\mathbf{x}_t,\mathbf{u}_t)|\mathcal{F}_t]=\mathbb{E}[c_n(\mathbf{x}_t,\mathbf{u}_t)|\mathbf{x}_t,\mathbf{u}_t]$, which equals $\mathbb{E}[c_n(i,u)]$ on the event $\{\mathbf{x}_t=i,\mathbf{u}_t=u\}$. Further, $c_n(\mathbf{x}_t,\mathbf{u}_t)$ is adapted to $\mathcal{F}_{t+1}$ for each $t$ and $\mathbb{E}[c_n(i,u)]<\infty$.
\end{assum}

\begin{assum}
\label{assum: stopping time}
For each $(i,u)\in\mathcal{X}\times\mathcal{U}$ and each $k\in\mathbb{N}$, the stopping time $T_{i,u}(k)$ is finite a.s., \ie,  $\mathbb{P}(T_{i,u}(k)<\infty)=1$.
\end{assum} 

\begin{defn}[Rooted Graphs]
A graph $\mathcal{G}(t)=(\mathcal{V},\mathcal{E}(t))$ is said to be rooted at node $v_n\in\mathcal{V}$ at time $t$ if for all nodes $v_l\in\mathcal{V}\backslash\{v_n\}$, there is a path from $v_n$ to $v_l$ at time $t$. A path from node $v_n\in\mathcal{V}$ to $v_l\in\mathcal{V}$ is a sequence of nodes $v_{k_1},v_{k_2},\cdots,v_{k_i}$ such that $v_{k_1}=v_n$, $v_{k_i}=v_l$ and $(v_{k_r},v_{k_{r+1}})\in\mathcal{E}(t)$ for $1\le r\le i-1$. A graph $\mathcal{G}(t)=(\mathcal{V},\mathcal{E}(t))$ is said to be rooted at time $t$ if it is rooted at some node $v_n\in\mathcal{V}$ at time $t$.
\end{defn}

\begin{assum}
\label{assum: rooted graph}
The graph $\mathcal{G}(t)=(\mathcal{V},\mathcal{E}(t))$ is directed and rooted for all $t\in\mathbb{N}$.
\end{assum}

For each $v_n$, define the local $QD$-learning operator $\mathcal{G}^n:\mathbb{R}^{|\mathcal{X}\times\mathcal{U}|}\mapsto\mathbb{R}^{|\mathcal{X}\times\mathcal{U}|}$ whose components $\mathcal{G}^n_{i,u}:\mathbb{R}^{|\mathcal{X}\times\mathcal{U}|}\mapsto\mathbb{R}$ are
\[\mathcal{G}_{i,u}^n(\mathbf{Q})=\mathbb{E}[c_n(i,u)]+\gamma\sum_{j\in\mathcal{X}}p_{ij}^u\min_{v\in\mathcal{U}}Q_{j,v}.\]
Let $\mathbf{Q}^{n*}=[Q_{i,u}^{n*}]\in\mathbb{R}^{|\mathcal{X}\times\mathcal{U}|}$ be the fixed point of $\mathcal{G}^n$, \ie, $Q_{i,u}^{n*}$, $\forall (i,u)\in\mathcal{X}\times\mathcal{U}$, satisfy
\[Q_{i,u}^{n*}=\mathbb{E}[c_n(i,u)]+\gamma\sum_{j\in\mathcal{X}}p_{ij}^u\min_{v\in\mathcal{U}}Q_{j,v}^{n*}.\]
Let $\mathbf{V}^{n*}=[V_{i}^{n*}]\in\mathbb{R}^{|\mathcal{X}|}$ be the optimal value function of agent $v_n$, where $V_i^{n*}=\min\limits_{u\in\mathcal{U}}Q_{i,u}^{n*}$.

Define the centralized $Q$-learning operator of all agents $\bar{\mathcal{G}}:\mathbb{R}^{|\mathcal{X}\times\mathcal{U}|}\mapsto\mathbb{R}^{|\mathcal{X}\times\mathcal{U}|}$, whose components $\bar{\mathcal{G}}_{i,u}:\mathbb{R}^{|\mathcal{X}\times\mathcal{U}|}\mapsto\mathbb{R}$ are
\begin{eqnarray*}
\bar{\mathcal{G}}_{i,u}(\mathbf{Q})\!\!\!\!&=&\!\!\!\!\frac{1}{N}\sum_{v_n\in\mathcal{V}}\mathcal{G}_{i,u}^n(\mathbf{Q})\\
&=&\!\!\!\!\frac{1}{N}\sum_{v_n\in\mathcal{V}}\mathbb{E}[c_n(i,u)]+\gamma\sum_{j\in\mathcal{X}}p_{ij}^u\min_{v\in\mathcal{U}}Q_{j,v}.
\end{eqnarray*}
Let $\mathbf{Q}^*=[Q_{i,u}^*]\in\mathbb{R}^{|\mathcal{X}\times\mathcal{U}|}$ be the fixed point of $\bar{\mathcal{G}}$, \ie, $Q_{i,u}^{*}$, $\forall (i,u)\in\mathcal{X}\times\mathcal{U}$, satisfy
\[Q_{i,u}^{*}=\frac{1}{N}\sum_{v_n\in\mathcal{V}}\mathbb{E}[c_n(i,u)]+\gamma\sum_{j\in\mathcal{X}}p_{ij}^u\min_{v\in\mathcal{U}}Q_{j,v}^{*}.\]
Proposition 5.1 in \cite{kar2013cal} indicates that $V_i^{*}=\min\limits_{u\in\mathcal{U}}Q_{i,u}^{*}$.

\subsection{Equivalent Expressions of the $Q$-value Update (\ref{eq:QD})}

Under Assumption \ref{assum: measurability}, equation (\ref{eq:QD}) is equivalent to
\begin{multline}
\label{eq:Q-G-n}
Q_{i,u}^n(t+1)\\
=Q_{i,u}^n(t)-\beta_{i,u}(t)\sum_{v_l\in\mathcal{N}_n(t)}\Big(Q_{i,u}^n(t)-Q_{i,u}^l(t)\Big)\\
+\alpha_{i,u}(t)\Big(\mathcal{G}_{i,u}^n(\mathbf{Q}_t^n)-Q_{i,u}^n(t)+\boldsymbol{\nu}_{\mathbf{x}_t,\mathbf{u}_t}^n(\mathbf{Q}_t^n)\Big),
\end{multline}
where $\boldsymbol{\nu}_{\mathbf{x}_t,\mathbf{u}_t}^n(\mathbf{Q}_t^n)=c_n(\mathbf{x}_t,\mathbf{u}_t)+\gamma \min_{v\in\mathcal{U}}Q_{\mathbf{x}_{t+1},v}^n(t)-\mathcal{G}_{i,u}^n(\mathbf{Q}_t^n)$, satisfying 
$\mathbb{E}[\boldsymbol{\nu}_{\mathbf{x}_t,\mathbf{u}_t}^n(\mathbf{Q}_t^n)|\mathcal{F}_t]=\boldsymbol{0}$ for all $t$. 
Equation (\ref{eq:Q-G-n}) with weights (\ref{eq:alpha directed})-(\ref{eq:beta directed}) is written as 
\begin{eqnarray}
\label{eq:Q-A-n}
Q_{i,u}^n(t+1)\!\!\!\!&=&\!\!\!\!\omega_{i,u}^{nn}(t)Q_{i,u}^n(t)+\sum_{v_l\in\mathcal{N}_n(t)}\omega_{i,u}^{nl}(t)Q_{i,u}^l(t)\nn\\
&&\!\!-\alpha_{i,u}(t)d_{\mathbf{x}_t,\mathbf{u}_t}^n(\mathbf{Q}^n_t),
\end{eqnarray}
where $\omega_{i,u}^{nn}(t)=1-\beta_{i,u}(t)|\mathcal{N}_n(t)|$, $\omega_{i,u}^{nl}(t)=\beta_{i,u}(t),\;v_l\in\mathcal{N}_n(t)$ and $d_{\mathbf{x}_t,\mathbf{u}_t}^n(\mathbf{Q}^n_t)=Q_{i,u}^n(t)-\mathcal{G}_{i,u}^n(\mathbf{Q}_t^n)-\boldsymbol{\nu}_{\mathbf{x}_t,\mathbf{u}_t}^n(\mathbf{Q}_t^n)$.

Let
\[\bar{Q}_{i,u}^n(t)=\mathbb{E}[Q_{i,u}^n(t)|\mathcal{F}_t],\;\forall v_n\in\mathcal{V},\;(i,u)\in\mathcal{X}\times\mathcal{U}.\]
By (\ref{eq:Q-A-n}), $\{\bar{Q}_{i,u}^n(t)\}$ evolves as 
\begin{eqnarray}
\label{eq:Q-A-n-expectation}
\bar{Q}_{i,u}^n(t+1)\!\!\!\!&=&\!\!\!\!\omega_{i,u}^{nn}(t)\bar{Q}_{i,u}^n(t)+\sum_{v_l\in\mathcal{N}_n(t)}\omega_{i,u}^{nl}(t)\bar{Q}_{i,u}^l(t)\nn\\
&&\!\!-\alpha_{i,u}(t)(\bar{Q}_{i,u}^n(t)-\mathcal{G}_{i,u}^n(\bar{\mathbf{Q}}_t^n)),
\end{eqnarray}
where $\bar{\mathbf{Q}}_t^n=\mathbb{E}[\mathbf{Q}_t^n|\mathcal{F}_t]$.

For $k\in\mathbb{N}$, let 
\[z_{i,u}^n(k)=\bar{Q}_{i,u}^n(T_{i,u}(k)),\;\forall v_n\in\mathcal{V},\;(i,u)\in\mathcal{X}\times\mathcal{U}.\] 
Since $\{\bar{Q}_{i,u}^n(t)\}$ only changes at the stopping times $T_{i,u}(k)$, by (\ref{eq:Q-A-n-expectation}), $\{z_{i,u}^n(k)\}$ evolves as 
\begin{eqnarray}
\label{eq:z-A-n}
z_{i,u}^n(k+1)\!\!\!\!&=&\!\!\!\!\hat{\omega}_{i,u}^{nn}(k)z_{i,u}^n(k)+\sum_{v_l\in\mathcal{N}_n^k}\hat{\omega}_{i,u}^{nl}(k)z_{i,u}^l(k)\nn\\
&&\!\!-a_kd_{i,u}^n(\mathbf{z}_k^n),
\end{eqnarray}
where $d_{i,u}^n(\mathbf{z}_k^n)=z_{i,u}^n(k)-\mathcal{G}_{i,u}^n(\mathbf{z}_k^n)$, with $\mathbf{z}_k^n=[z_{i,u}^n(k)]\in\mathbb{R}^{|\mathcal{X}\times\mathcal{U}|}$, $\hat{\omega}_{i,u}^{nl}(k)=b$, $v_l\in\mathcal{N}_n^k$, and $\hat{\omega}_{i,u}^{nn}(k)=1-b|\mathcal{N}_n^k|$, with $\mathcal{N}_n^k=\mathcal{N}_n(T_{i,u}(k))$.

Denote $\mathbf{z}_{i,u}(k)=[z_{i,u}^1(k)\;z_{i,u}^2(k)\;\cdots\;z_{i,u}^N(k)]\T$, $\forall (i,u)\in\mathcal{X}\times\mathcal{U}$. By (\ref{eq:z-A-n}), $\{\mathbf{z}_{i,u}(k)\}$ evolves as 
\begin{equation}
\label{eq:z}
\mathbf{z}_{i,u}(k+1)=A_{i,u}^k\mathbf{z}_{i,u}(k)-a_k\bar{\mathbf{d}}_{i,u}(\mathbf{z}_k).
\end{equation}  
Here, $A_{i,u}^k=I_N-bL^k_{i,u}$ whose $(n,l)$-th entry is $\hat{\omega}_{i,u}^{nl}(k)$ and $\bar{\mathbf{d}}_{i,u}(\mathbf{z}_k)=\mathbf{z}_{i,u}(k)-\mathcal{G}_{i,u}(\mathbf{z}_k)$, where $L^k_{i,u}=L(T_{i,u}(k))$, $\mathcal{G}_{i,u}(\mathbf{z}_k)=[\mathcal{G}_{i,u}^1(\mathbf{z}_k^1)\;\mathcal{G}_{i,u}^2(\mathbf{z}_k^2)\;\cdots\;\mathcal{G}_{i,u}^N(\mathbf{z}_k^N)]\T$, and $\mathbf{z}_k=\left[\mathbf{z}_k^1\;\mathbf{z}_k^2\;\cdots\;\mathbf{z}_k^N\right]$.

\subsection{Convergence of $QD$-Learning}

The proofs of the propositions given in this subsection can be found in the appendix.

The following proposition guarantees the boundedeness of $Q$-value estimates.

\begin{prop}[Boundedness]
	\label{prop: boundedness}
	 Let $\{\mathbf{Q}_t^n\}$ be the successive iterates obtained at agent $v_n$ by (\ref{eq:QD}). Then, under Assumptions \ref{assum: measurability} and \ref{assum: stopping time}, for each agent $v_n\in\mathcal{V}$, $\{\mathbf{Q}_{t}^n\}$ is pathwise bounded, \ie, $\mathbb{P}(\sup_{t\ge 0}\|\mathbf{Q}_{t}^n\|_{\infty}<\infty)=1$.
\end{prop}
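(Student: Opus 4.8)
The plan is to establish pathwise boundedness of $\{\mathbf{Q}_t^n\}$ by first passing to the "expected" iterates $\{\bar{\mathbf{Q}}_t^n\}$ (equivalently, the sampled sequence $\{\mathbf{z}_{i,u}(k)\}$ governed by \eqref{eq:z}), bounding those, and then controlling the martingale-difference noise $\boldsymbol{\nu}^n$ separately. Concretely, I would argue as follows.

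First, observe that the update \eqref{eq:z} is a stochastic-approximation iteration of the form $\mathbf{z}_{i,u}(k+1) = A_{i,u}^k \mathbf{z}_{i,u}(k) - a_k\big(\mathbf{z}_{i,u}(k) - \mathcal{G}_{i,u}(\mathbf{z}_k)\big)$, where $A_{i,u}^k = I_N - bL_{i,u}^k$ is row-stochastic with nonnegative entries (this uses $b \in [\eta, \frac{1-\eta}{N-1})$ and $|\mathcal{N}_n^k| \le N-1$, so each diagonal entry $1 - b|\mathcal{N}_n^k| \ge \eta > 0$). The key structural fact is that $\mathcal{G}_{i,u}^n$ is a $\gamma$-pseudo-contraction in the sup norm over the joint $Q$-vector: $|\mathcal{G}_{i,u}^n(\mathbf{Q})| \le \mathbb{E}[c_n(i,u)] + \gamma\|\mathbf{Q}\|_\infty$. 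I would then run the standard "scaling/iterated-supremum" argument (as in Borkar–Meyn, and as used in \cite{kar2013cal}): define $\mathcal{M}_k = \max\{1, \max_{i,u}\|\mathbf{z}_{i,u}(k)\|_\infty, \dots\}$ over a window of steps, rescale the iterates by $\mathcal{M}$, and show that the rescaled iterates track an ODE whose origin is globally asymptotically stable; the contraction factor $\gamma < 1$ together with the row-stochasticity of $A_{i,u}^k$ (which is nonexpansive in $\|\cdot\|_\infty$) gives the needed stability. This yields $\sup_k \max_{i,u}\|\mathbf{z}_{i,u}(k)\|_\infty < \infty$ a.s. Since $\{\bar Q_{i,u}^n(t)\}$ only changes at the sampling times $T_{i,u}(k)$, this bounds $\sup_t|\bar Q_{i,u}^n(t)|$ as well; Assumption~\ref{assum: stopping time} guarantees these sampling times are a.s. finite so no iterate is "frozen" pathologically.

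Next I would recover a bound on the actual iterates $Q_{i,u}^n(t)$ from the bound on $\bar Q_{i,u}^n(t) = \mathbb{E}[Q_{i,u}^n(t)\mid\mathcal F_t]$. Writing $Q_{i,u}^n(t+1)$ via \eqref{eq:Q-A-n} and subtracting \eqref{eq:Q-A-n-expectation}, the difference $e_{i,u}^n(t) = Q_{i,u}^n(t) - \bar Q_{i,u}^n(t)$ satisfies a linear recursion driven by $\alpha_{i,u}(t)\boldsymbol{\nu}_{\mathbf x_t,\mathbf u_t}^n$, i.e. a martingale-difference sequence whose conditional second moment is controlled, once $\|\mathbf{Q}_t^n\|_\infty$ is known to be finite, by $\|\mathbf{Q}_t^n\|_\infty$ and $\mathbb{E}[c_n(i,u)^2]$ — but since we only have $\mathbb{E}[c_n(i,u)]<\infty$, I would instead bound the noise increments pathwise/in $L^1$ and use the step-size summability-type conditions on $a_k$ (more precisely $a_k \to 0$ and the a.s.-convergence machinery for $L^1$-bounded martingale transforms) to conclude $\{e_{i,u}^n(t)\}$ is pathwise bounded. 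Combining with the bound on $\bar Q$ gives $\mathbb{P}(\sup_t\|\mathbf{Q}_t^n\|_\infty < \infty) = 1$ for every $v_n$.

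The main obstacle is the Borkar–Meyn-style stability argument under the \emph{directed, time-varying} topology: unlike the undirected case in \cite{kar2013cal}, the matrices $A_{i,u}^k$ are merely row-stochastic (not doubly stochastic), so the consensus term no longer preserves a weighted average and one cannot diagonalize or use a quadratic Lyapunov function tied to the average. I would handle this by exploiting that $\|A_{i,u}^k\|_\infty = 1$ (row-stochasticity) to ensure the consensus step is $\|\cdot\|_\infty$-nonexpansive coordinatewise, so that $\|\mathbf{z}_{i,u}(k+1)\|_\infty \le (1-a_k)\|\mathbf{z}_{i,u}(k)\|_\infty + a_k(\max_n\mathbb{E}[c_n(i,u)] + \gamma\max_{j,v}\|\mathbf{z}_{j,v}(k)\|_\infty)$; taking the max over $(i,u)$ and using $\gamma<1$ gives a stable scalar recursion for $\max_{i,u}\|\mathbf{z}_{i,u}(k)\|_\infty$, which is in fact cleaner than the general ODE method and sidesteps the doubly-stochastic requirement entirely. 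A secondary technical point is carefully juggling the three interleaved time-scales (the real time $t$, the per-pair sampling index $k$, and the fact that different state-action pairs are sampled at different times), which Assumptions~\ref{assum: measurability}–\ref{assum: stopping time} are designed to make rigorous.
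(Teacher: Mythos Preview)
Your scalar $\|\cdot\|_\infty$-recursion for the sampled iterates $\{\mathbf{z}_{i,u}(k)\}$, exploiting row-stochasticity of $A_{i,u}^k$ together with the $\gamma$-contraction of $\mathcal G^n$, is correct and is exactly the comparison that the paper (via Lemma~5.1 of \cite{kar2013cal}) also relies on. The difference is in the order of operations and, critically, in how the noise is handled.

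The paper does \emph{not} first bound a noise-free auxiliary sequence and then control $Q-\bar Q$. Instead it first establishes Lemma~\ref{lem: coro} (built on Lemma~\ref{lem convergence}): the purely noise-driven residual process $\mathbf z_{i,u}(t{+}1:t_0)=(I_N-\beta_{i,u}(t)L(t)-\alpha_{i,u}(t)I_N)\mathbf z_{i,u}(t:t_0)+\alpha_{i,u}(t)\bar{\boldsymbol\nu}_{i,u}(t)$, started from zero at $t_0$, stays uniformly $\le\varepsilon$ once $t_0\ge t_\varepsilon$, and this is argued using only $\mathbb E[\bar{\boldsymbol\nu}_{i,u}(t)\mid\mathcal F_t]=0$, with no a~priori bound on $\|\mathbf Q_t^n\|_\infty$. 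With that in hand, the \cite{kar2013cal}-style sup-norm comparison is run on $\mathbf Q_t^n$ \emph{directly}, the noise contribution having been replaced by the uniform $\varepsilon$.

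Your step (b) has a genuine gap. Subtracting \eqref{eq:Q-A-n-expectation} from \eqref{eq:Q-A-n} does \emph{not} produce a linear recursion driven only by $\alpha_{i,u}(t)\boldsymbol\nu$: you also pick up the term $\alpha_{i,u}(t)\big(\mathcal G_{i,u}^n(\mathbf Q_t^n)-\mathcal G_{i,u}^n(\bar{\mathbf Q}_t^n)\big)$, and, more importantly, the noise $\boldsymbol\nu_{\mathbf x_t,\mathbf u_t}^n(\mathbf Q_t^n)$ is a function of $\mathbf Q_t^n$, not of $\bar{\mathbf Q}_t^n$ or of $e$. Any pathwise or $L^1$ control of $e_{i,u}^n$ through this recursion therefore already presupposes a bound on $\|\mathbf Q_t^n\|_\infty$, which is precisely the conclusion you are after; the martingale-transform tools you invoke need integrability of the increments that is unavailable under only $\mathbb E[c_n(i,u)]<\infty$ and unbounded $\mathbf Q$. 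The paper's ordering---first Lemma~\ref{lem: coro} to make the noise uniformly small, then the contraction comparison on $Q$ itself---is exactly what breaks this circularity.
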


Under Assumption \ref{assum: rooted graph}, $A_{i,u}^k$ is rooted for all $k\in\mathbb{N}$. Since $b\in [\eta,\frac{1-\eta}{N-1})$, 
$\hat{\omega}_{i,u}^{nl}(k)$ is lower bounded by $\eta$ for all $k\in\mathbb{N}$. Let $\Phi_{i,u}(k,s)=A_{i,u}^kA_{i,u}^{k-1}\cdots A_{i,u}^s$ for $k\ge s\ge 0$. 
By Lemma 3.4 in \cite{sundaram2018distributed}, for each $s$, there exists a stochastic vector $\mathbf{q}_{i,u}(s)=[q_{i,u}^1(s)\; q_{i,u}^2(s)\;\cdots\;q_{i,u}^N(s)]\T\in\mathbb{R}^N$ 
such that $\lim_{k\to\infty}\Phi_{i,u}(k,s)=\boldsymbol{1}\mathbf{q}_{i,u}\T(s)$. Note that $\mathbf{q}_{i,u}\T(s)=\mathbf{q}_{i,u}\T(s+1)A_{i,u}^s$. Denote by $\{\mathbf{Q}_{i,u}(t)\}$ the $\{\mathcal{F}_t\}$ adapted process with $\mathbf{Q}_{i,u}(t)=[Q_{i,u}^1(t)\;Q_{i,u}^2(t)\;\cdots\;Q_{i,u}^N(t)]\T$. The following proposition establishes the consensus in the agent $Q$-value updates.

\begin{prop}[Consensus]
\label{prop: consensus}
Let $\{\mathbf{Q}_t^n\}$ be the successive iterates obtained at agent $v_n$ by (\ref{eq:QD}). Then, under Assumptions \ref{assum: measurability}-\ref{assum: rooted graph}, agents reach consensus asymptotically, 
\[\mathbb{P}\Big(\limsup_{t\to\infty}\|\mathbf{Q}_{i,u}(t)-\boldsymbol{1}\mathbf{p}_{i,u}\T(t)\mathbf{Q}_{i,u}(t)\|=0\Big)=1,\]
where $\mathbf{p}_{i,u}(t)=\mathbf{q}_{i,u}(k)$, $t\in[T_{i,u}(k),T_{i,u}(k+1))$.
\end{prop}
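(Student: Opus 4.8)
The plan is to reduce the claim to a disagreement estimate for the sampled sequence $\{\mathbf{z}_{i,u}(k)\}$ of (\ref{eq:z}), and then exploit the two-time-scale structure of that recursion: the row-stochastic matrices $A_{i,u}^k$ contract disagreements geometrically, while the innovation perturbations $a_k\bar{\mathbf{d}}_{i,u}(\mathbf{z}_k)$ are pathwise bounded (by Proposition~\ref{prop: boundedness}) and vanish since $a_k\to0$.

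First I would reduce to the sampled process. Because $\alpha_{i,u}(t)=\beta_{i,u}(t)=0$ unless $t$ is a sampling instant of $(i,u)$, every coordinate of $\mathbf{Q}_{i,u}(t)$ is frozen on $[T_{i,u}(k),T_{i,u}(k+1))$, where it equals $\mathbf{z}_{i,u}(k)$ while $\mathbf{p}_{i,u}(t)=\mathbf{q}_{i,u}(k)$; since, by Assumption~\ref{assum: stopping time}, the sampling instants are finite and hence $T_{i,u}(k)\to\infty$, it suffices to prove that, for each of the finitely many $(i,u)$, $\tilde{\mathbf{z}}_{i,u}(k):=\mathbf{z}_{i,u}(k)-\boldsymbol{1}\mathbf{q}_{i,u}\T(k)\mathbf{z}_{i,u}(k)\to\boldsymbol{0}$ almost surely, and then take a union bound. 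Next I would strengthen the convergence $\Phi_{i,u}(k,s)\to\boldsymbol{1}\mathbf{q}_{i,u}\T(s)$ to a \emph{uniform geometric} one. Under Assumption~\ref{assum: rooted graph}, each $A_{i,u}^k=I_N-bL^k_{i,u}$ is row-stochastic with diagonal entries $1-b|\mathcal{N}_n^k|\ge\eta$, off-diagonal entries in $\{0\}\cup[\eta,1)$, and a rooted graph; as there are only finitely many such graphs on $N$ nodes, Lemma~3.4 of \cite{sundaram2018distributed} provides constants $C\ge0$ and $\rho\in(0,1)$ (depending only on $N,\eta$) with $\|\Phi_{i,u}(k,s)-\boldsymbol{1}\mathbf{q}_{i,u}\T(s)\|\le C\rho^{k-s}$ for all $k\ge s$, and also $\|I_N-\boldsymbol{1}\mathbf{q}_{i,u}\T(s)\|\le C$ for all $s$.

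I would then unroll (\ref{eq:z}) from an index $s$, giving $\mathbf{z}_{i,u}(k+1)=\Phi_{i,u}(k,s)\mathbf{z}_{i,u}(s)-\sum_{r=s}^{k}a_r\Phi_{i,u}(k,r+1)\bar{\mathbf{d}}_{i,u}(\mathbf{z}_r)$ with $\Phi_{i,u}(k,k+1):=I_N$, and use the left-invariance $\mathbf{q}_{i,u}\T(r)=\mathbf{q}_{i,u}\T(k+1)\Phi_{i,u}(k,r)$ (iterated from $\mathbf{q}_{i,u}\T(s)=\mathbf{q}_{i,u}\T(s+1)A_{i,u}^s$) to subtract $\boldsymbol{1}\mathbf{q}_{i,u}\T(k+1)\mathbf{z}_{i,u}(k+1)$, obtaining
\[
\tilde{\mathbf{z}}_{i,u}(k+1)=\big(\Phi_{i,u}(k,s)-\boldsymbol{1}\mathbf{q}_{i,u}\T(s)\big)\mathbf{z}_{i,u}(s)-\sum_{r=s}^{k}a_r\big(\Phi_{i,u}(k,r+1)-\boldsymbol{1}\mathbf{q}_{i,u}\T(r+1)\big)\bar{\mathbf{d}}_{i,u}(\mathbf{z}_r).
\]
On the almost sure event of Proposition~\ref{prop: boundedness} the quantities $B:=\sup_k\|\mathbf{z}_{i,u}(k)\|$ and $D:=\sup_k\|\bar{\mathbf{d}}_{i,u}(\mathbf{z}_k)\|$ are finite (the latter using continuity of $\mathcal{G}_{i,u}$ and $\mathbb{E}[c_n(i,u)]<\infty$), so taking norms and applying the geometric bound yields $\|\tilde{\mathbf{z}}_{i,u}(k+1)\|\le CB\rho^{k-s}+(CD/\rho)\sum_{r=s}^{k}a_r\rho^{k-r}$. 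Given $\epsilon>0$, choosing $s$ (finite, possibly sample-path dependent) with $a_r\le\epsilon$ for $r\ge s$ bounds the tail sum by $\epsilon/(1-\rho)$; letting $k\to\infty$ with $s$ fixed kills the geometric term, so $\limsup_k\|\tilde{\mathbf{z}}_{i,u}(k)\|\le CD\epsilon/(\rho(1-\rho))$, and $\epsilon\downarrow0$ gives $\tilde{\mathbf{z}}_{i,u}(k)\to\boldsymbol{0}$ almost surely, which by the reduction above is the assertion.

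The step I expect to be the real obstacle is the \emph{uniform} geometric contraction: plain convergence of $\Phi_{i,u}(k,s)$ to $\boldsymbol{1}\mathbf{q}_{i,u}\T(s)$ does not suffice to control the weighted tail $\sum_r a_r\rho^{k-r}$, so one must extract a single rate $\rho$ valid for all pairs $k\ge s$, which is exactly where rootedness of every $A_{i,u}^k$, the uniform lower bound $\eta$ on its nonzero entries, and the finiteness of the set of admissible topologies are all needed. A secondary, more routine care point is the measurability bookkeeping behind the reduction (that $\bar{\mathbf{Q}}_{i,u}(t)=\mathbf{Q}_{i,u}(t)$ and is constant on each inter-sampling interval, via Assumption~\ref{assum: measurability}).
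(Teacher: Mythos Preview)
Your proposal is correct and follows essentially the same route as the paper: unroll the sampled recursion (\ref{eq:z}), subtract the weighted average $\boldsymbol{1}\mathbf{q}_{i,u}\T(\cdot)\mathbf{z}_{i,u}(\cdot)$ to obtain the residual (\ref{residue}), and control it using boundedness of $\bar{\mathbf{d}}_{i,u}(\mathbf{z}_k)$ (from Proposition~\ref{prop: boundedness}) together with $\Phi_{i,u}(k,s)\to\boldsymbol{1}\mathbf{q}_{i,u}\T(s)$ and $a_k\to0$, then pass back via the piecewise-constant interpolation. You are in fact more careful than the paper's own proof here, which concludes directly from pointwise convergence of $\Phi_{i,u}(k,s)$ and $a_k\to0$ without isolating the uniform geometric rate that you correctly flag as the crux for handling the tail sum; the paper only makes the exponential rate explicit later, in the proof of Proposition~\ref{prop:QD directed}.
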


\begin{prop} 
\label{prop:QD directed}
     Consider the network $\mathcal{G}(t)=(\mathcal{V},\mathcal{E}(t))$. Let $\{\mathbf{Q}_t^n\}$ and $\{\mathbf{V}_t^n\}$ be the successive iterates obtained at agent $v_n$ by the $QD$-learning algorithm (\ref{eq:V})-(\ref{eq:QD}) with weights (\ref{eq:alpha directed})-(\ref{eq:beta directed}). Then, under Assumptions \ref{assum: measurability}-\ref{assum: rooted graph}, for each agent $v_n\in\mathcal{V}$,
    \[\mathbb{P}\Big(\limsup\limits_{t\to\infty}\|\mathbf{Q}_t^n-\mathbf{Q}^{*}\|_\infty\le R\Big)=1,\]
     \[\mathbb{P}\Big(\limsup\limits_{t\to\infty}\|\mathbf{V}_t^n-\mathbf{V}^{*}\|_\infty\le R\Big)=1,\]
     where $R=\max\limits_{v_n,v_l\in\mathcal{V}}\|\mathbf{Q}^{n*}-\mathbf{Q}^{l*}\|_\infty$. For each $i\in\mathcal{X}$, if $|Q_{i,u}^*-Q_{i,v}^*|\ge 2R,\;u,v\in\mathcal{U}$, each agent can learn the optimal policy $\pi^*$.  Furthermore, for each agent $v_n\in\mathcal{V}$ and state-action pair $(i,u)\in\mathcal{X}\times\mathcal{U}$,
    \[\mathbb{P}\big(\limsup\limits_{t\to\infty}Q_{i,u}^n(t)\le M\big)=1,\mathbb{P}\big(\liminf\limits_{t\to\infty}Q_{i,u}^n(t)\ge m\big)=1,\]
   where $M=\max\limits_{v_n\in\mathcal{V}}\max\limits_{i,u}Q_{i,u}^{n*}$, and $m=\min\limits_{v_n\in\mathcal{V}}\min\limits_{i,u}Q_{i,u}^{n*}$.   
\end{prop}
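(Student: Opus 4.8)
The plan is to combine the boundedness and consensus results (Propositions~\ref{prop: boundedness}--\ref{prop: consensus}) with a contraction analysis of the slow (innovation) dynamics, and then to read off the remaining claims. As a first reduction I would show that the innovation noise averages out: $\{\boldsymbol{\nu}_{\mathbf{x}_t,\mathbf{u}_t}^n(\mathbf{Q}_t^n)\}$ is a martingale difference with respect to $\{\mathcal{F}_t\}$ (Assumption~\ref{assum: measurability}) that is pathwise bounded (Proposition~\ref{prop: boundedness}), and, using the step-size conditions $a_k\to0$, $\sum_k a_k=\infty$, $a_{k-1}/a_k\to1$ (the last one replacing square-summability, as in \cite{kar2013cal}), one obtains $Q_{i,u}^n(t)-\bar Q_{i,u}^n(t)\to 0$ \as\ for each $v_n$ and $(i,u)$. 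Together with Proposition~\ref{prop: consensus}, this reduces the problem to the pathwise study of the deterministic recursion (\ref{eq:z}) and of its consensus component $\hat z_{i,u}(k):=\mathbf{q}_{i,u}\T(k)\mathbf{z}_{i,u}(k)$, which governs the asymptotics of $Q_{i,u}^n(t)$ for every $v_n$.

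Next I would derive the recursion for $\hat z_{i,u}(k)$. Left-multiplying (\ref{eq:z}) by $\mathbf{q}_{i,u}\T(k+1)$, using $\mathbf{q}_{i,u}\T(k+1)A_{i,u}^k=\mathbf{q}_{i,u}\T(k)$ and $\mathbf{q}_{i,u}\T(k+1)\boldsymbol{1}=1$, and inserting the consensus estimate $z_{j,v}^n(k)=\hat z_{j,v}(k)+o(1)$ from Proposition~\ref{prop: consensus}, the update becomes
\[
\hat z_{i,u}(k+1)=\hat z_{i,u}(k)-a_k\Big(\hat z_{i,u}(k)-\bar c_{i,u}(k+1)-\gamma\sum_{j\in\mathcal{X}}p_{ij}^u\min_{v\in\mathcal{U}}\hat z_{j,v}(k)\Big)+a_k\epsilon_k,
\]
where $\bar c_{i,u}(k):=\sum_{v_n\in\mathcal{V}}q_{i,u}^n(k)\,\mathbb{E}[c_n(i,u)]$ is a convex combination of the agents' expected costs and $\epsilon_k\to0$. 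For each $k$ the map $\mathbf{Q}\mapsto[\bar c_{i,u}(k)+\gamma\sum_j p_{ij}^u\min_v Q_{j,v}]$ is a $\gamma$-contraction in $\|\cdot\|_\infty$; running the usual perturbed-contraction argument (the step-size conditions absorb $a_k\epsilon_k$) shows $\hat{\mathbf{z}}(k)$ is attracted to the fixed point $\mathbf{Q}_k^{\mathbf{q}}$ of this map. Since $\mathbf{Q}_k^{\mathbf{q}}$ and $\mathbf{Q}^*$ are the fixed points of discounted Bellman operators that share the kernel $p_{ij}^u$ and differ only in the stage cost ($\bar c_{i,u}(k)$ versus $\tfrac1N\sum_n\mathbb{E}[c_n(i,u)]$, both convex combinations of $\{\mathbb{E}[c_n(i,u)]\}_{v_n}$), substituting $\mathbb{E}[c_n(i,u)]=Q_{i,u}^{n*}-\gamma\sum_j p_{ij}^u\min_v Q_{j,v}^{n*}$ and using the nonexpansiveness of $\min_v$ converts the cost mismatch into a disagreement among $\{\mathbf{Q}^{n*}\}$, giving $\limsup_k\|\hat{\mathbf{z}}(k)-\mathbf{Q}^*\|_\infty\le R$; with the previous paragraph this yields $\mathbb{P}(\limsup_t\|\mathbf{Q}_t^n-\mathbf{Q}^*\|_\infty\le R)=1$ for every $v_n$. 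The $\mathbf{V}_t^n$ bound is then immediate from $V_i^n(t)=\min_u Q_{i,u}^n(t)$, $V_i^*=\min_u Q_{i,u}^*$ (Proposition~5.1 of \cite{kar2013cal}) and the nonexpansiveness of $\min_u$; and if $|Q_{i,u}^*-Q_{i,v}^*|\ge2R$ for all $u\ne v$, then, since $Q_{i,u}^n(t)$ eventually lies in $[Q_{i,u}^*-R,\,Q_{i,u}^*+R]$ for every $u$, these intervals are disjoint with the ordering of $\{Q_{i,u}^*\}_u$, so $\argmin_u Q_{i,u}^n(t)=\argmin_u Q_{i,u}^*=\pi^*(i)$.

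For the envelope $\mathbb{P}(\limsup_t Q_{i,u}^n(t)\le M)=1$ and $\mathbb{P}(\liminf_t Q_{i,u}^n(t)\ge m)=1$ I would argue directly on (\ref{eq:z-A-n}). Because $b<\tfrac{1-\eta}{N-1}$ and $a_k\le\eta$, the self-weight $\hat\omega_{i,u}^{nn}(k)=1-b|\mathcal{N}_n^k|$ exceeds $\eta\ge a_k$, so (\ref{eq:z-A-n}) expresses $z_{i,u}^n(k+1)$ as a convex combination of $\{z_{i,u}^l(k)\}_{v_l}$ and $\mathcal{G}_{i,u}^n(\mathbf{z}_k^n)$. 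Tracking $\max_{v_n,(i,u)}z_{i,u}^n(k)$ and $\min_{v_n,(i,u)}z_{i,u}^n(k)$ and iterating the monotone, $\gamma$-contractive structure of the $\mathcal{G}^n$ — a few bootstrapping passes sharpen a crude $\tfrac{1}{1-\gamma}$-type bound into the per-agent fixed-point extremes $M$ and $m$ — and then transferring back to $Q_{i,u}^n(t)$ as in the first paragraph, establishes both bounds.

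The main obstacle is the directed, time-varying topology. In the undirected analysis of \cite{kar2013cal} the consensus weights are doubly stochastic, so averaging the update annihilates the consensus term and the average converges exactly to $\mathbf{Q}^*$; here the weights $\mathbf{q}_{i,u}(k)$ are non-uniform and need not converge, so the slow dynamics tracks a moving, $\mathbf{q}$-weighted Bellman fixed point. The delicate points are thus (i) controlling $\hat{\mathbf{z}}(k)$ around this moving target under the non-square-summable step sizes, and (ii) pinning the residual bias to the stated constant $R$, which requires carefully pairing the $\gamma$-contraction with the nonexpansiveness of $\min_v$.
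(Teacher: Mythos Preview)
Your overall architecture is sound and genuinely different from the paper's. The paper tracks the \emph{unweighted} average $\tfrac1N\boldsymbol{1}\T\mathbf{z}_{i,u}(k)$; because $\boldsymbol{1}\T L_{i,u}^k\neq 0$ for directed graphs this produces an extra term $W_{i,u}(k)=-\tfrac{b}{a_k}\tfrac1N\boldsymbol{1}\T L_{i,u}^k\mathbf{z}_{i,u}(k)$, which the paper controls by combining the consensus residual formula (\ref{residue}) with the exponential decay of $\Phi_{i,u}(k,s)-\boldsymbol{1}\mathbf{q}_{i,u}\T(s)$ and the assumption $a_{k-1}/a_k\to1$, ultimately reducing $W_{i,u}(k)$ to a cost-difference term. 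Your choice of the left-eigenvector weighted average $\hat z_{i,u}(k)=\mathbf{q}_{i,u}\T(k)\mathbf{z}_{i,u}(k)$ kills the consensus matrix from the outset (via $\mathbf{q}_{i,u}\T(k+1)A_{i,u}^k=\mathbf{q}_{i,u}\T(k)$) and replaces that whole $W_{i,u}$ analysis by a time-varying stage cost $\bar c_{i,u}(k)$. That is a cleaner decomposition, and your derivations of the $\hat z$ recursion, the $\mathbf{V}$ bound, and the $2R$ policy-separation argument match the paper's logic.

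There is, however, a real gap at the point where you claim the bias pins down to exactly $R$. Your ``substitution plus nonexpansiveness'' sketch, carried through, gives only
\[
\big|\bar c_{i,u}(k)-\tfrac1N\textstyle\sum_n\mathbb{E}[c_n(i,u)]\big|
\;\le\;\max_{n,l}\|\mathbb{E}[\mathbf{c}_n]-\mathbb{E}[\mathbf{c}_l]\|_\infty,
\]
and feeding this into the $\gamma$-contraction yields $\limsup_k\|\hat{\mathbf{z}}(k)-\mathbf{Q}^*\|_\infty\le\tfrac{1}{1-\gamma}\max_{n,l}\|\mathbb{E}[\mathbf{c}_n]-\mathbb{E}[\mathbf{c}_l]\|_\infty$, which is the weaker bound the paper records only as a remark, not the stated $R$. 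The substitution $\mathbb{E}[c_n(i,u)]=Q_{i,u}^{n*}-\gamma\sum_j p_{ij}^u\min_v Q_{j,v}^{n*}$ does not cancel the $\tfrac{1}{1-\gamma}$ factor: the two $\gamma$-terms it introduces carry opposite signs and the nonexpansive estimate on $\min_v$ adds them, producing a $(1+\gamma)$ rather than a $(1-\gamma)$ coefficient. The paper obtains the factor $(1-\gamma)$ on the bias precisely through the $W_{i,u}$ computation, which ends in $\lim_k W_{i,u}(k)=\tfrac{b}{N}\sum_{l\in\mathcal{N}_n^k}(c_l-c_n)$ and then uses a bound of the form $c_l-c_n\le(1-\gamma)\max_{j,v}(Q_{j,v}^{l*}-Q_{j,v}^{n*})$; this is the step your route bypasses but does not replace. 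A similar issue affects your $M,m$ argument: iterating the crude envelope $\max_{n,(i,u)} z_{i,u}^n(k)$ against $\mathcal{G}^n$ stabilizes at $\tfrac{1}{1-\gamma}\max_{n,i,u}\mathbb{E}[c_n(i,u)]$ and bootstrapping does not sharpen it further, whereas the paper gets $M$ and $m$ by running the $F(k),f(k)$ analysis per agent, i.e., tracking $F^n(k)=\max_{i,u}(z_{i,u}^n(k)-Q_{i,u}^{n*})$ and its $\min$ counterpart.
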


\begin{remark}
 Note that if the matrices $A_{i,u}^k$ do not have a common left-eigenvector, convergence to a constant value is not guaranteed. Thus, the convergence of $\mathbf{Q}_t^n$ to $\mathbf{Q}^*$ cannot be guaranteed for a time-varying directed graph. Instead, Proposition \ref{prop:QD directed} provides estimates of the region of the final consensus value and the distance to the optimal value function $\mathbf{V}^*$.
\end{remark}

\section{Resilient $QD$-Learning}
\label{sec: resilient QD}

With the results on $QD$-learning in time-varying directed graphs in hand, we now turn our attention to analyzing networks with Byzantine adversaries. In this section, we will first show the vulnerability of the $QD$-learning algorithm (\ref{eq:V})-(\ref{eq:QD}) in the presence of a single adversarial agent. After that, we will provide a resilient $QD$-learning algorithm  that can handle a potentially large number of adversaries.  We start with the following definitions.

\begin{defn}[\cite{leblanc2013resilient} Byzantine agent]
A Byzantine agent is capable of behaving arbitrarily (\ie, it may not follow the prescribed algorithms), and is allowed to send conflicting or incorrect values to different neighbors at each time-step. It is also allowed to know the network topology and the private information of all other agents.
\end{defn}

\begin{defn}[\cite{leblanc2013resilient} $r$-reachable set] 
Consider a graph $\mathcal{G}=(\mathcal{V},\mathcal{E})$. For any given $r\in\mathbb{N}$, a subset of nodes $\mathcal{S}\subseteq\mathcal{V}$ is said to be $r$-reachable if there exists a node $v_n\in\mathcal{S}$ such that $|\mathcal{N}_n\setminus\mathcal{S}|\ge r$.  
\end{defn}

\begin{defn}[\cite{leblanc2013resilient} $r$-robust graphs]
For $r\in\mathbb{N}$, graph $\mathcal{G}$ is said to be $r$-robust if for all pairs of disjoint nonempty subsets $S_1,S_2\subset\mathcal{V}$, at least one of $S_1$ or $S_2$ is $r$-reachable. 
\end{defn}

\begin{defn}[\cite{leblanc2013resilient} F-local set]
For $F\in\mathbb{N}$, the set of adversaries $\mathcal{A}$ is an $F$-local set if $|\mathcal{N}_n\cap\mathcal{A}|\le F$, for all $v_n\in\mathcal{R}$. 
\end{defn}

\begin{assum}
\label{assum: Byzantine}
The adversarial nodes are Byzantine agents and restricted to form a $F$-local set. 
The agent network $\mathcal{G}=\{\mathcal{V},\mathcal{E}\}$ is time-invariant and $(2F+1)$-robust. 
\end{assum}

The following proposition illustrates that by running the $QD$-learning algorithm, regular agents cannot learn the optimal value function and the optimal policy even in the presence of a single adversarial agent. 

\begin{prop}
\label{prop: QD failure}
Consider the time-invariant network $\mathcal{G}=(\mathcal{V},\mathcal{E})$, and let there be a single adversarial node $\mathcal{A}=\{v_N\}$. Suppose the network is connected, and all agents run the $QD$-learning algorithm (\ref{eq:V})-(\ref{eq:QD}). If $v_N$ keeps its $Q$-value estimate $Q_{i,u}^N(t)$ fixed at some arbitrary value $Q_{i,u}^{N*}$, for each regular agent $v^n\in\mathcal{R}$, $Q_{i,u}^n(t)\to Q_{i,u}^{N*}$ and $V_{i}^n(t)\to V_{i}^{N*}$ as $t\to\infty$ \as.
\end{prop}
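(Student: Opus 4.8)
The plan is to show that when the single adversarial node $v_N$ holds its estimate fixed at $Q^{N*}_{i,u}$, the regular agents' updates reduce to a consensus-plus-innovation recursion that is driven toward $Q^{N*}_{i,u}$, and then invoke a contraction/averaging argument to conclude convergence. First I would fix a state-action pair $(i,u)$ and restrict attention to the subvector of $Q$-estimates indexed over the regular agents $\mathcal{R}=\{v_1,\dots,v_{N-1}\}$. Passing to the sampled timescale $k$ as in equations (\ref{eq:z-A-n})--(\ref{eq:z}), the recursion for the regular block becomes
\[
\mathbf{z}^{\mathcal{R}}_{i,u}(k+1) = \tilde{A}^k_{i,u}\,\mathbf{z}^{\mathcal{R}}_{i,u}(k) + b\,\mathbf{g}^k\,Q^{N*}_{i,u} - a_k \bar{\mathbf{d}}^{\mathcal{R}}_{i,u}(\mathbf{z}_k),
\]
where $\tilde{A}^k_{i,u}$ is the principal submatrix of $A^k_{i,u}$ on the regular indices (row-substochastic, with deficiency exactly at rows of regular agents neighboring $v_N$) and $\mathbf{g}^k$ collects those neighbor counts. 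The key structural fact is that, since the original graph is connected, every regular agent has a path to $v_N$, so $\tilde{A}^k_{i,u}$ is a substochastic matrix whose associated "graph rooted at the boundary" is connected; hence products $\prod \tilde{A}^k_{i,u}$ converge to zero and the stationary behaviour is an affine combination that puts \emph{all} weight on the boundary input $Q^{N*}_{i,u}$.

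The main steps I would carry out are: (i) establish boundedness of the regular-agent estimates — this follows either directly from Proposition~\ref{prop: boundedness} (whose proof does not rely on the adversary, since the adversary's fixed value is bounded) or by a short Lyapunov argument; (ii) show the "drift-free" homogeneous part $\mathbf{w}(k+1) = \tilde{A}^k_{i,u}\mathbf{w}(k)$ contracts to $0$, using that $\tilde{A}^k_{i,u}$ has all entries bounded below by $\eta$ on its support and row sums $\le 1$ with strict deficiency reachable from every node (invoking Lemma 3.4 in \cite{sundaram2018distributed} or a standard paracontraction argument for such "substochastic rooted" matrices); (iii) treat the innovation term $-a_k\bar{\mathbf{d}}^{\mathcal{R}}_{i,u}$ as a vanishing perturbation — because $a_k\to 0$ and the estimates are bounded, this term is asymptotically negligible relative to the geometric pull of the consensus term toward $Q^{N*}_{i,u}$; (iv) conclude $z^n_{i,u}(k)\to Q^{N*}_{i,u}$ for every regular $v_n$, and then lift back from the sampled timescale to real time $t$ using Assumption~\ref{assum: stopping time} (each pair is sampled infinitely often, a.s.) exactly as in the proof of Proposition~\ref{prop:QD directed}. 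Finally, $V^n_i(t)=\min_u Q^n_{i,u}(t)\to \min_u Q^{N*}_{i,u}=V^{N*}_i$ follows by continuity of $\min$.

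One subtlety worth being careful about: the innovation term is not a pure $O(a_k)$ additive perturbation in the usual stochastic-approximation sense, because $\bar{\mathbf{d}}^{\mathcal{R}}_{i,u}(\mathbf{z}_k)$ depends on $\mathbf{z}_k$ and could in principle be kept from vanishing if the estimates themselves do not converge — so the argument should be set up as a joint statement. The cleanest route is to note that the combined iteration on the regular block is a fixed-timescale consensus iteration toward $Q^{N*}_{i,u}$ perturbed by a term of size $O(a_k)\cdot(\text{bounded})$; since the homogeneous dynamics contract uniformly geometrically (ratio bounded away from $1$ over windows of bounded length, by rootedness and the $\eta$-lower bound), a discrete Gronwall / summation-by-parts estimate shows the deviation $\|\mathbf{z}^{\mathcal{R}}_{i,u}(k) - Q^{N*}_{i,u}\mathbf{1}\|$ is bounded by a geometric term plus $\sum_{j} \rho^{k-j} a_j$, which $\to 0$ since $a_j\to 0$.

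I expect the main obstacle to be step (ii)/(iii): making precise the uniform geometric contraction of the \emph{time-varying} substochastic matrices $\tilde{A}^k_{i,u}$ and then showing the $O(a_k)$ innovation drift genuinely dies out rather than accumulating. Everything else — boundedness, the passage between sampled and real time, and the final $\min$ step — is routine given the machinery already developed for Propositions~\ref{prop: boundedness}--\ref{prop:QD directed}.
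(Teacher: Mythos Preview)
Your approach is correct but takes a different route from the paper. The paper keeps the \emph{full} $N$-dimensional vector (including the adversary) and observes that, because $v_N$ holds its value fixed, the $N$-th row of each $A^k_{i,u}$ is $[0\;\cdots\;0\;1]$. Consequently $\mathbf{q}^\top=[0_{1\times(N-1)}\;1]$ is a \emph{common} left-eigenvector of every $A^k_{i,u}$, and the paper simply invokes Proposition~\ref{prop: consensus} to conclude that all components of $\mathbf{z}_{i,u}(k)$ converge to $\mathbf{q}^\top\mathbf{z}_{i,u}(k)=z^N_{i,u}(k)=Q^{N*}_{i,u}$. By contrast, you project onto the $(N-1)$-dimensional regular block, obtain a row-\emph{sub}stochastic iteration with exogenous input $Q^{N*}_{i,u}$, and argue directly that the homogeneous part contracts geometrically while the $O(a_k)$ innovation drift is washed out by a $\sum_j \rho^{k-j}a_j\to 0$ estimate. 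The paper's route is shorter because it recycles the consensus machinery already built; your route is more self-contained and makes explicit the ``absorbing'' role of $v_N$. One simplification you can exploit: the proposition hypothesizes a \emph{time-invariant} connected network, so $\tilde{A}^k_{i,u}$ is in fact a fixed substochastic matrix with spectral radius strictly less than $1$ (every regular node reaches the deficient boundary), which removes the ``uniform contraction over windows'' issue you flagged as the main obstacle.
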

\begin{proof}
Since the adversarial node keeps its value fixed for all time, $\{Q_{i,u}^N(t)\}$ is updated as $Q_{i,u}^N(t+1)=Q_{i,u}^{N}(t)$, for all $t\in\mathbb{N}$, with $Q_{i,u}^{N}(0)=Q_{i,u}^{N*}$. Thus, the dynamics of $\mathbf{z}_{i,u}(k)$ take the form of (\ref{eq:z}), with
\[ A_{i,u}^k=\left[\begin{array}{cc}
A_{i,u}^{\mathcal{R},\mathcal{R}}(k)    &  A_{i,u}^{\mathcal{R},\mathcal{A}}(k) \\
0     & 1
\end{array}\right],\]
where $A_{i,u}^{\mathcal{R},\mathcal{R}}(k) =[\hat{\omega}_{i,u}^{nl}(k)]\in\mathbb{R}^{N-1\times N-1}$ contains the weights placed by regular agents on other regular agents, and $A_{i,u}^{\mathcal{R},\mathcal{A}}(k)=[\hat{\omega}_{i,u}^{1N}(k)\;\hat{\omega}_{i,u}^{2N}(k)\;\cdots\;\hat{\omega}_{i,u}^{NN}(k)]\T\in\mathbb{R}^{N}$. For all $k\in\mathbb{N}$, $A_{i,u}^k$ have a common left-eigenvector $\mathbf{q}\T=[0_{1\times N-1}\;1]$. Then, by Proposition \ref{prop: consensus}, $z_{i,u}^n(k)$ will converge to $\mathbf{q}\T \mathbf{z}_{i,u}(k)=z_{i,u}^N(k)=Q_{i,u}^{N*}$, which indicates $Q_{i,u}^n(t)$, $\forall v_n\in\mathcal{R}$, will converge to $Q_{i,u}^{N*}$ \as.
\end{proof}

The following proposition illustrates that {\it any} algorithm that always finds the optimal value function and the optimal policy in the absence of adversaries can also be arbitrarily co-opted by an adversary.

\begin{prop}
\label{prop: co-opted}
Suppose $\Gamma$ is an algorithm that guarantees that all agents learn the optimal value function $\mathbf{V}^*$ and the optimal policy $\pi^*$ when there are no adversarial agents. Then a single adversary can cause all agents to converge to any arbitrary value when running algorithm $\Gamma$, and furthermore, will remain undetected.
\end{prop}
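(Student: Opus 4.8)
The plan is an indistinguishability (simulation) argument: the lone adversary $v_N$ impersonates a \emph{regular} agent equipped with a fabricated local cost function, chosen so that the ``optimal value function of all agents'' induced by that fabricated cost is precisely the target value the adversary wants the network to converge to.

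\textbf{Step 1 (realize the target as an optimal value function).} Fix an arbitrary target $\tilde{\mathbf{V}}=[\tilde V_1\;\cdots\;\tilde V_M]\T$. I would produce a cost $\tilde c_N$ for $v_N$ such that $\tilde V_i=\min_{u}\tilde Q_{i,u}^{*}$ for all $i$, where $\tilde{\mathbf{Q}}^{*}$ is the fixed point of $\bar{\mathcal{G}}$ with $v_N$'s cost replaced by $\tilde c_N$. A direct way: pick any desired policy $\pi$, demand $\tilde Q^{*}_{i,\pi(i)}=\tilde V_i$ and $\tilde Q^{*}_{i,u}=\tilde V_i+1$ for $u\neq\pi(i)$ (so $\pi$ is the unique optimal policy with value $\tilde{\mathbf{V}}$), and read off the required mean cost from the Bellman equation for $\bar{\mathcal{G}}$:
\[ \mathbb{E}[\tilde c_N(i,u)] = N\Big(\tilde Q^{*}_{i,u}-\gamma\sum_{j\in\mathcal{X}}p_{ij}^u\tilde V_j\Big)-\sum_{n=1}^{N-1}\mathbb{E}[c_n(i,u)]. \]
The right-hand side is finite, so the deterministic cost equal to it satisfies Assumption \ref{assum: measurability}; by construction the MDP with agent costs $c_1,\dots,c_{N-1},\tilde c_N$ has optimal value $\tilde{\mathbf{V}}$ and optimal policy $\pi$.

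\textbf{Step 2 (two coupled executions).} I would compare the \emph{nominal} execution --- all $N$ agents regular, $v_N$ genuinely having cost $\tilde c_N$, everyone running $\Gamma$ --- against the \emph{attacked} execution --- $v_1,\dots,v_{N-1}$ regular with their true costs running $\Gamma$, while the Byzantine $v_N$ at each step sends its neighbors exactly the messages a regular agent with cost $\tilde c_N$ running $\Gamma$ would send given the same history (internally simulating any randomness $\Gamma$ uses). Because there is a global controller, the state/action/cost streams of the regular agents do not depend on $v_N$'s behavior and can be coupled to be identical across the two executions; then an induction on $t$ shows the entire view of $v_1,\dots,v_{N-1}$ (states, actions, costs, incoming messages, internal randomness) is pathwise identical. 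In the nominal execution there are no adversaries, so by hypothesis every agent learns $\tilde{\mathbf{V}}$ and $\pi$; transporting this along the coupling, the regular agents also learn $\tilde{\mathbf{V}}$ and $\pi$ in the attacked execution, which is the claim. For ``undetected'': any detector run by a regular agent is a measurable function of that agent's view, which has the same law in both executions, so no detector can flag $v_N$ under attack without also flagging the genuinely regular $v_N$ in the nominal run.

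\textbf{Main obstacle.} The conceptual content is entirely in the coupling of Step 2 --- asserting that the regular agents literally cannot tell the two worlds apart --- which is clean here precisely because of the global-controller assumption: the adversary influences the regular agents only through messages, and (by the Byzantine model) it knows the topology, all private costs, the protocol $\Gamma$, and its own randomness, hence can reproduce a legitimate message stream. Step 1 is just bookkeeping; the only care needed is that an arbitrary $\tilde{\mathbf{V}}$ be realizable as an optimal value function under an admissible cost for $v_N$, which the explicit formula above handles.
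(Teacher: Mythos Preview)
Your proposal is correct and follows the same indistinguishability strategy as the paper: the adversary fabricates a local cost and then participates in $\Gamma$ exactly as a regular agent with that cost would, so the remaining agents cannot tell the attacked run from a legitimate all-regular run and must output the adversary's target. Your treatment is more careful than the paper's terse argument---you give an explicit Bellman-based formula for $\mathbb{E}[\tilde c_N(i,u)]$ that realizes an \emph{arbitrary} target $\tilde{\mathbf V}$ (the paper writes a single cost expression aimed at $\mathbf V^{N*}$ without spelling out the general construction), and you make the coupling/indistinguishability step and the role of the global-controller assumption explicit---but the core idea is identical.
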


\begin{proof}
Assume $v_N$ is an adversarial agent. Suppose agent $v_N$ wishes all agents to calculate $\mathbf{V}^{N*}$ as an outcome of running the algorithm $\Gamma$. Agent $v_N$ chooses a cost function $\bar{c}_N(i,u)=-\sum_{v_n\in\mathcal{V}\setminus\{v_N\}}c_n(i,u)+c_N(i,u)$. Now agent $v_N$ participates in algorithm $\Gamma$ by pretending its local cost function is $\bar{c}_N(i,u)$ instead of $c_N(i,u)$. Since $\bar{c}_N(i,u)$ is a legitimate cost that could have been assigned to $v_N$, this scenario is indistinguishable from the cases that where $v_N$ is a regular agent. Thus, algorithm $\Gamma$ must cause all agents to learn $\mathbf{V}^{N*}$.
\end{proof}

The above results show that the price for resilience is a loss of optimality (in general). This motivates us to create a resilient algorithm that provides approximately optimal solutions. To do this, consider a modification of the $QD$-learning algorithm, where each regular agent $v_n$ updates $Q_{i,u}^n(t)$ for state-action pair $(i,u)$ as
\begin{multline}
    \label{eq:Q-resilient}
       Q_{i,u}^n(t+1)\\
       =Q_{i,u}^n(t)-\beta_{i,u}(t)\sum_{v_l\in\mathcal{J}_{i,u}^n(t)}(Q_{i,u}^n(t)-Q_{i,u}^l(t))\\
    +\alpha_{i,u}(t)\big(c_n(\mathbf{x}_t,\mathbf{u}_t)+\gamma \min_{v\in\mathcal{U}}Q_{\mathbf{x}_{t+1},v}^n(t)-Q_{i,u}^n(t)\big),
\end{multline}
where $\alpha_{i,u}(t)$ and $\beta_{i,u}(t)$ are in (\ref{eq:alpha directed}) and (\ref{eq:beta directed}), and $\mathcal{J}^n_{i,u}(t)\in\mathcal{N}_n$ is computed by the following procedure. Agent $v_n$ receives $\{Q_{i,u}^l(t)$, $l\in\mathcal{N}_n\}$ and removes the $F$ highest and $F$ smallest values that are larger and smaller than $Q_{i,u}^n(t)$, respectively. If there are fewer than $F$ values higher than $Q_{i,u}^n(t)$, agent $v_n$ removes all values that are strictly larger than $Q_{i,u}^n(t)$. Likewise, if there are less than $F$ values strictly smaller than $Q_{i,u}^n(t)$, then agent $v_n$ removes all values that are strictly smaller than $Q_{i,u}^n(t)$. Otherwise, it removes precisely the smallest $F$ values. Let $\mathcal{J}^n_{i,u}(t)\in\mathcal{N}_n$ denote the set of agents whose values were retained by regular agent $v_n$ at time $t$ for state-action pair $(i,u)$.

The above resilient $QD$-Learning algorithm for each regular agent $v_n\in\mathcal{R}$ is summarized in Algorithm \ref{algorithm: RQD}.
\begin{algorithm}
\caption{Resilient $QD$ Learning Algorithm}
\label{algorithm: RQD}
\begin{algorithmic}[1]
     \State Initialize $\mathbf{Q}_0^n$, $\mathbf{V}_0^n$
     \For{$t=0,1,2,\cdots$}
         \State Receive state $\mathbf{x}_{t}$, action $\mathbf{u}_t$ and cost $c_n(\mathbf{x_t},\mathbf{u_t})$
         \State Receive state $\mathbf{x}_{t+1}$ and $\mathbf{Q}_{t}^l$, $l\in\mathcal{N}_n$
         \For{$(i,u)\in\mathcal{X}\times\mathcal{U}$}
       \State Compute $\mathcal{J}^n_{i,u}(t)\in\mathcal{N}_n$
       \State Compute $Q_{i,u}^n(t+1)$ as (\ref{eq:Q-resilient})
      \EndFor
     \For{$i\in\mathcal{X}$}
     \State Compute $V_{i}^n(t+1)=\min\limits_{u\in\mathcal{U}}Q_{i,u}^n(t+1)$
     \EndFor  
     \EndFor
\end{algorithmic}
\end{algorithm}

Define the centralized $Q$-learning operator of all regular agents  $\bar{\mathcal{G}}^{\mathcal{R}}:\mathbb{R}^{|\mathcal{X}\times\mathcal{U}|}\mapsto\mathbb{R}^{|\mathcal{X}\times\mathcal{U}|}$, whose components $\bar{\mathcal{G}}_{i,u}^{\mathcal{R}}:\mathbb{R}^{|\mathcal{X}\times\mathcal{U}|}\mapsto\mathbb{R}$ are
\begin{eqnarray*}
\bar{\mathcal{G}}_{i,u}^{\mathcal{R}}(\mathbf{Q})\!\!\!\!&=&\!\!\!\!\frac{1}{|\mathcal{R}|}\sum_{v_n\in\mathcal{R}}\mathcal{G}_{i,u}^n(\mathbf{Q})\\
&=&\!\!\frac{1}{|\mathcal{R}|}\sum_{v_n\in\mathcal{R}}\mathbb{E}[c_n(i,u)]+\gamma\sum_{j\in\mathcal{X}}p_{ij}^u\min_{v\in\mathcal{U}}Q_{j,v}.
\end{eqnarray*}
Let $\mathbf{Q}^{\mathcal{R}*}=[Q_{i,u}^{\mathcal{R}*}]\in\mathbb{R}^{|\mathcal{X}\times\mathcal{U}|}$ be the fixed point of $\bar{\mathcal{G}}^{\mathcal{R}}$, 
\ie, $Q_{i,u}^{\mathcal{R}*}$, $\forall (i,u)\in\mathcal{X}\times\mathcal{U}$, satisfy
\[Q_{i,u}^{\mathcal{R}*}=\frac{1}{|\mathcal{R}|}\sum_{v_n\in\mathcal{R}}\mathbb{E}[c_n(i,u)]+\gamma\sum_{j\in\mathcal{X}}p_{ij}^u\min_{v\in\mathcal{U}}Q_{j,v}^{\mathcal{R}*}.\]
Let $\mathbf{V}^{\mathcal{R}*}=[V_{i}^{\mathcal{R}*}]\in\mathbb{R}^{|\mathcal{X}|}$ be the optimal value function of all regular agents, where $V_i^{\mathcal{R}*}=\min\limits_{u\in\mathcal{U}}Q_{i,u}^{\mathcal{R}*}$.

We will use the following result in our analysis of Algorithm \ref{algorithm: RQD}.

\begin{lem}[\cite{vaidya2012matrix,sundaram2018distributed}]
\label{lem: equivalent}
Consider a network $\mathcal{G}=(\mathcal{V},\mathcal{E})$, with a set of regular nodes $\mathcal{R}$ and a set of adversarial nodes $\mathcal{A}$. Suppose that $\mathcal{A}$ is an $F$-local set, and that each regular node has at least $2F+1$ neighbors. Consider an iteration of the form
\begin{eqnarray}
\label{x:DO}
x_n(k+1)\!\!\!\!&=&\!\!\!\!a_{nn}(k)x_n(k)+\sum_{v_l\in\mathcal{J}^n(k)}\!\!\!\!a_{nl}(k)x_l(k)\nn\\
&&\!\!\!\!-a_kd_n(k),  
\end{eqnarray}
where $a_{nl}(k)\ge\eta$, $\sum_{l}a_{nl}(k)=1$, $v_l\in\{v_n\}\cup\mathcal{J}^n(k)$, with $J^n(k)$ being generated in the same way as $J_{i,u}^n(t)$ and $d_n(k)$ is a given sequence. 
Equation (\ref{x:DO}) is equivalent to 
\[x_n(k+1)=\bar{a}_{nn}(k)x_n(k)+\sum_{v_l\in\mathcal{N}_n\cap\mathcal{R}}\bar{a}_{nl}(k)x_l(k)-a_kd_n(k),\]
where the weights $\bar{a}_{nl}(k)$ are nonnegative and satisfy the following properties:
\begin{itemize}
    \item $\bar{a}_{nn}(k)+\sum\limits_{v_l\in\mathcal{N}_n\cap\mathcal{R}}\bar{a}_{nl}(k)=1$,
    \item $\bar{a}_{nn}(k)\ge\eta$ and at least $|\mathcal{N}_n|-2F$ of other weights are lower bounded by $\frac{\eta}{2}$.
\end{itemize}
\end{lem}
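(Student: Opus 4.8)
This is a ``matrix representation'' statement for the resilient filtering rule, and the natural route is to fix a regular node $v_n$ and a step $k$ and show explicitly how the contribution of every discarded or adversarial neighbor can be reabsorbed into a convex combination over $v_n$ and its \emph{regular} neighbors. The plan is: (i) partition the retained set $\mathcal{J}^n(k)$ into $\mathcal{J}^n(k)\cap\mathcal{R}$ and $\mathcal{J}^n(k)\cap\mathcal{A}$; (ii) prove that each retained adversarial value is a convex combination of $x_n(k)$ and the value of some regular neighbor; (iii) substitute these identities into (\ref{x:DO}) and read off the weights $\bar a_{nl}(k)$; (iv) verify the nonnegativity, row-stochasticity, and the two lower bounds.

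The heart of step (ii) is the following claim: for every $v_a\in\mathcal{J}^n(k)\cap\mathcal{A}$ there exist a regular neighbor $v_{r(a)}\in\mathcal{N}_n\cap\mathcal{R}$ and $\lambda_a\in[0,1]$ with $x_a(k)=\lambda_a x_n(k)+(1-\lambda_a)x_{r(a)}(k)$. Suppose $x_a(k)>x_n(k)$ (the case $x_a(k)<x_n(k)$ is symmetric and $x_a(k)=x_n(k)$ is trivial). Since $v_a$ lies strictly above $x_n(k)$ yet survived the filter, node $v_n$ cannot have discarded \emph{all} values strictly above $x_n(k)$, so it discarded precisely the $F$ largest such values, each therefore $\ge x_a(k)$; together with $v_a$ these are $F+1$ neighbors with value $\ge x_a(k)$, and since $\mathcal{A}$ is $F$-local at least one of them is regular. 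As $v_a$ is adversarial, that regular node $v_{r(a)}$ is one of the discarded ones and satisfies $x_{r(a)}(k)\ge x_a(k)>x_n(k)$, which forces the stated $\lambda_a$. Substituting into (\ref{x:DO}) and collecting the coefficients of $x_n(k)$ and of each $x_l(k)$ with $v_l\in\mathcal{N}_n\cap\mathcal{R}$ then gives the claimed identity with nonnegative weights whose sum is unchanged (hence $1$); in particular $\bar a_{nn}(k)\ge a_{nn}(k)\ge\eta$, since reassignment only adds mass to $x_n(k)$, and every retained regular neighbor keeps its original weight $\ge\eta$.

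The nontrivial bound is that at least $|\mathcal{N}_n|-2F$ of the remaining weights are at least $\eta/2$. Here I would argue by counting: at most $2F$ neighbors are discarded, so at least $|\mathcal{N}_n|-2F$ are retained; each retained regular neighbor appears directly (weight $\ge\eta$), and each retained adversarial neighbor is ``replaced'' by a regular neighbor. Because on the side of $x_n(k)$ on which a retained adversarial neighbor lies, the number of discarded regular neighbors is at least the number of retained adversarial neighbors on that side (a consequence of $F$-locality together with the fact that $F$ values are discarded from that side whenever any retained adversarial neighbor lies there), these replacement regular neighbors can be chosen distinct from one another and from the retained regular neighbors, yielding at least $|\mathcal{N}_n|-2F$ distinct regular neighbors carrying positive weight. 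To lift each reassigned weight to $\eta/2$, I would route each adversarial neighbor's weight between a regular neighbor above it and one below it (allowing $x_n(k)$ to play the role of one endpoint), so that one endpoint receives at least half of that neighbor's weight $a_{na}(k)\ge\eta$; combined with the retained regular neighbors this gives the desired count, with a short case split handling the situations where $x_n(k)$ absorbs the heavy share (in which case one checks that the retained regular neighbors already number at least $|\mathcal{N}_n|-2F$).

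\textbf{Main obstacle.} The convex-combination claim and the stochasticity/nonnegativity are routine once the filter's edge cases (fewer than $F$ values strictly above or below $x_n(k)$) are handled; the delicate part is the simultaneous bookkeeping in the last step — keeping the up-to-$F$ reassignments consistent so that the target regular neighbors are distinct, each clears the $\eta/2$ threshold, and the $|\mathcal{N}_n|-2F$ count is not cannibalized — which requires the case analysis on how many neighbors are discarded on each side of $x_n(k)$ and how many of those are regular versus adversarial. Since the lemma is quoted from \cite{vaidya2012matrix,sundaram2018distributed}, one may alternatively defer the full argument to those references.
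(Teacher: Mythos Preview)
The paper does not prove this lemma at all: it is quoted from \cite{vaidya2012matrix,sundaram2018distributed} and invoked without argument in the proof of Theorem~\ref{thm-resilient}. Your closing sentence --- defer to those references --- is therefore exactly what the paper itself does.

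Your sketch goes further and is essentially the argument one finds in those references. Step~(ii) is correct as written (the $F{+}1$ values at or above $x_a(k)$ against the $F$-local bound is precisely the pigeonhole used there), and the nonnegativity, row-stochasticity, $\bar a_{nn}(k)\ge\eta$, and the side-by-side injective pairing of retained adversaries with discarded regulars all follow as you say. On the $\eta/2$ count, your parenthetical fallback is in fact valid rather than hand-waving: if some retained adversarial $v_a$ with $x_a(k)>x_n(k)$ has no regular neighbor at or below $x_a(k)$, then every neighbor strictly below $x_n(k)$ is adversarial, so at most $B\le F$ values are discarded from below while exactly $F$ are discarded from above, and the retained regulars alone number at least $(|\mathcal{N}_n|-B-F)-(F-B)=|\mathcal{N}_n|-2F$. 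The complementary situation --- regular neighbors on both sides of every retained adversary --- is where your two-endpoint routing applies, and reconciling the ``heavy endpoint gets $\ge\eta/2$'' choice with the distinctness coming from the injective pairing is exactly the bookkeeping carried out in \cite{vaidya2012matrix}; your plan matches it.
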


We now come to the main result in our paper.
\begin{thm}
\label{thm-resilient}
    Consider the network $\mathcal{G}=(\mathcal{V},\mathcal{E})$ with regular nodes $\mathcal{R}$ and adversarial nodes $\mathcal{A}$.   Under Assumptions \ref{assum: measurability}, \ref{assum: stopping time} and \ref{assum: Byzantine}, Algorithm \ref{algorithm: RQD} guarantees that, for each regular agent $v_n\in\mathcal{R}$, \[\mathbb{P}\Big(\limsup\limits_{t\to\infty}\left\|\mathbf{Q}_t^n-\mathbf{Q}^{\mathcal{R}*}\right\|_\infty\le R\Big)=1,\]
     \[\mathbb{P}\Big(\limsup\limits_{t\to\infty}\|\mathbf{V}_t^n-\mathbf{V}^{\mathcal{R}*}\|_\infty\le R\Big)=1,\]
     where 
    \begin{equation}
    \label{RQD:R}
        R=\max_{v_n,v_l\in\mathcal{R}}\|\mathbf{Q}^{n*}-\mathbf{Q}^{l*}\|_\infty.
    \end{equation}
     For each $i\in\mathcal{X}$, if $|Q_{i,u}^{\mathcal{R}*}-Q_{i,v}^{\mathcal{R}*}|\ge 2R$, $u,v\in\mathcal{U}$, each regular agent can learn the optimal  policy $\pi^{\mathcal{R}*}$.  Furthermore, for each regular agent $v_n\in\mathcal{R}$ and state-action pair $(i,u)\in\mathcal{X}\times\mathcal{U}$,
     \begin{eqnarray}
     \label{RQD:upper bound}
     &&\!\!\!\!\mathbb{P}\Big(\limsup\limits_{t\to\infty}Q_{i,u}^n(t)\le M^{\mathcal{R}}\Big)=1,\\
    \label{RQD:lower bound}
     &&\!\!\!\!\mathbb{P}\Big(\liminf\limits_{t\to\infty}Q_{i,u}^n(t)\ge m^{\mathcal{R}}\Big)=1,
     \end{eqnarray}
     where $M^{\mathcal{R}}=\max\limits_{v_n\in\mathcal{R}}\max\limits_{i,u}Q_{i,u}^{n*},$ and $m^{\mathcal{R}}=\min\limits_{v_n\in\mathcal{R}}\min\limits_{i,u}Q_{i,u}^{n*}$.
\end{thm}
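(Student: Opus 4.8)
The plan is to follow the same three–stage route (boundedness, consensus, convergence) used for Proposition~\ref{prop:QD directed}, replacing the rootedness of Assumption~\ref{assum: rooted graph} by the combinatorial strength of $(2F+1)$-robustness and using Lemma~\ref{lem: equivalent} to eliminate the adversaries' contributions from the dynamics. First I would run the reductions of Section~\ref{sec: directed QD} on the resilient update (\ref{eq:Q-resilient}): take conditional expectations to pass to $\{\bar Q_{i,u}^n(t)\}$, then sample at the instants $T_{i,u}(k)$ to obtain $z_{i,u}^n(k)=\bar Q_{i,u}^n(T_{i,u}(k))$. Applying Lemma~\ref{lem: equivalent} (legitimate because $(2F+1)$-robustness forces $|\mathcal{N}_n|\ge 2F+1$ for every regular $v_n$, via the partition $\{v_n\},\mathcal{V}\setminus\{v_n\}$) to the regular-agent block, and stacking the regular agents into $\mathbf{z}_{i,u}^{\mathcal{R}}(k)\in\mathbb{R}^{|\mathcal{R}|}$, yields
\[\mathbf{z}_{i,u}^{\mathcal{R}}(k+1)=\bar A_{i,u}^k\,\mathbf{z}_{i,u}^{\mathcal{R}}(k)-a_k\big(\mathbf{z}_{i,u}^{\mathcal{R}}(k)-\mathcal{G}_{i,u}^{\mathcal{R}}(\mathbf{z}_k^{\mathcal{R}})\big),\]
where $\mathcal{G}_{i,u}^{\mathcal{R}}$ is the restriction of $\mathcal{G}_{i,u}$ to the regular agents and $\bar A_{i,u}^k$ is row-stochastic with diagonal entries $\ge\eta$ and at least $|\mathcal{N}_n|-2F\ge 1$ off-diagonal entries in each row bounded below by $\eta/2$. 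The adversaries have now vanished from the recursion, leaving an essentially cooperative problem on the subnetwork of regular agents, to which the machinery behind Propositions~\ref{prop: boundedness}--\ref{prop:QD directed} can be adapted.

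Second I would establish the analogues of Propositions~\ref{prop: boundedness} and~\ref{prop: consensus}. Boundedness is inherited almost verbatim: the trimming rule is ``safe'' in the sense that the retained values, together with $v_n$'s own, always lie within the interval spanned by the regular agents' estimates, so the consensus step cannot inflate the iterates, and the stochastic-approximation argument for Proposition~\ref{prop: boundedness} applies to the regular block. For consensus I would invoke the standard resilient-consensus fact that, under $(2F+1)$-robustness and an $F$-local adversary set, any node holding an extremal value retains after trimming a regular neighbour on the opposite side of its own value, so the spread $\max_{v_n\in\mathcal{R}}z_{i,u}^n(k)-\min_{v_n\in\mathcal{R}}z_{i,u}^n(k)$ contracts by a fixed factor over every window of $|\mathcal{R}|$ steps; equivalently, the products $\Phi_{i,u}^{\mathcal{R}}(k,s)=\bar A_{i,u}^k\cdots\bar A_{i,u}^s$ converge geometrically to $\boldsymbol{1}\,\mathbf{q}_{i,u}^{\mathcal{R}}(s)\T$ for some stochastic vectors $\mathbf{q}_{i,u}^{\mathcal{R}}(s)$ (cf.\ \cite{leblanc2013resilient,vaidya2012matrix,sundaram2018distributed}). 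This gives asymptotic consensus of the regular agents around the weighted average $\theta_{i,u}(k)=\mathbf{q}_{i,u}^{\mathcal{R}}(k)\T\mathbf{z}_{i,u}^{\mathcal{R}}(k)$, exactly paralleling Proposition~\ref{prop: consensus}.

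Third, I would analyse $\boldsymbol\theta_k=[\theta_{i,u}(k)]$ as in the proof of Proposition~\ref{prop:QD directed}. Using $\mathbf{q}_{i,u}^{\mathcal{R}}(k)\T=\mathbf{q}_{i,u}^{\mathcal{R}}(k+1)\T\bar A_{i,u}^k$ and the vanishing consensus error, $\boldsymbol\theta_k$ obeys, up to perturbations summable against $\{a_k\}$, $\theta_{i,u}(k+1)=(1-a_k)\theta_{i,u}(k)+a_k\big(\hat c_k(i,u)+\gamma\sum_{j\in\mathcal{X}}p_{ij}^u\min_{v}\theta_{j,v}(k)\big)$, where $\hat c_k(i,u)=\sum_{v_n\in\mathcal{R}}q_{i,u}^n(k+1)\,\mathbb{E}[c_n(i,u)]$ is a time-varying convex combination of the regular costs; thus $\boldsymbol\theta_k$ tracks a family of $\gamma$-contractions that differ from $\bar{\mathcal{G}}^{\mathcal{R}}$ only in the cost term. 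A Robbins--Monro argument ($a_k\to 0$, $\sum_k a_k=\infty$, $a_{k-1}/a_k\to 1$) then places $\limsup_k\boldsymbol\theta_k$ within $R$ of $\mathbf{Q}^{\mathcal{R}*}$, since every such convex-combination operator has its fixed point within $R$ of $\mathbf{Q}^{\mathcal{R}*}$; a companion monotone/interval estimate against the scalar maps $x\mapsto\max_{n}\mathbb{E}[c_n(i,u)]+\gamma x$ and $x\mapsto\min_{n}\mathbb{E}[c_n(i,u)]+\gamma x$ supplies the $M^{\mathcal{R}}$ and $m^{\mathcal{R}}$ bounds. Undoing the two reductions --- the martingale-difference term $\boldsymbol\nu^n$ has a vanishing cumulative effect, as in Proposition~\ref{prop:QD directed} --- transfers these to $\{\mathbf{Q}_t^n\}$; the bound for $\{\mathbf{V}_t^n\}$ follows from $V_i^n(t)=\min_u Q_{i,u}^n(t)$ and the $1$-Lipschitzness of $\min$ in $\|\cdot\|_\infty$, and if $|Q_{i,u}^{\mathcal{R}*}-Q_{i,v}^{\mathcal{R}*}|\ge 2R$ for all $u\ne v$ then $R$-accuracy forces $\argmin_u Q_{i,u}^n(t)$ to agree eventually with $\pi^{\mathcal{R}*}(i)$.

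The principal obstacle is the consensus step. In the directed setting, Assumption~\ref{assum: rooted graph} makes every $A_{i,u}^k$ individually rooted, so the product convergence is quoted off the shelf; here the trimmed regular-agent graph need not be rooted at any single time, and the rank-one convergence must be extracted from the interaction between $(2F+1)$-robustness, $F$-locality, and the precise definition of $\mathcal{J}_{i,u}^n(t)$ --- in particular the claim, used above, that an extremal regular node always keeps a regular neighbour on the far side of its value after trimming, which is where robustness enters. A secondary but still delicate point is obtaining exactly the constant $R$ (rather than a larger multiple such as $R/(1-\gamma)$) in the convergence stage; this needs the comparison between the various convex-combination operators to be done carefully, exploiting that each $\mathbf{Q}^{n*}$ is itself a fixed point of $\mathcal{G}^n$, and mirrors whatever sharpening is used in the proof of Proposition~\ref{prop:QD directed}.
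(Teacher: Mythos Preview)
Your overall strategy---reduce to regular-only dynamics via Lemma~\ref{lem: equivalent}, then rerun the boundedness/consensus/convergence analysis of Section~\ref{sec: directed QD}---is exactly the paper's. The difference is that you misidentify the ``principal obstacle.'' You assert that ``the trimmed regular-agent graph need not be rooted at any single time,'' and therefore propose to establish consensus by a bespoke spread-contraction argument exploiting the extremal-value property of the trimming rule. The paper instead observes (citing Lemma~2.3 of \cite{sundaram2018distributed}) that if $\mathcal{G}$ is $(2F+1)$-robust and one deletes at most $2F$ incoming edges from each regular node, the subgraph on $\mathcal{R}$ \emph{is} rooted; since Lemma~\ref{lem: equivalent} guarantees that in each row of $\bar A_{i,u}^k$ at least $|\mathcal{N}_n|-2F$ off-diagonal entries are $\ge\eta/2$, the matrix $\bar A_{i,u}^k$ is rooted for every $k$. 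This places the regular-agent recursion squarely back in the setting of Assumption~\ref{assum: rooted graph}, so Propositions~\ref{prop: boundedness}--\ref{prop:QD directed} apply verbatim with $\mathcal{V}$ replaced by $\mathcal{R}$ and $\bar{\mathcal{G}}$ by $\bar{\mathcal{G}}^{\mathcal{R}}$, yielding all four conclusions (including the sharp constant $R$ and the bounds $M^{\mathcal{R}},m^{\mathcal{R}}$) without further work. Your route would also succeed, but it reproves product convergence from scratch and then reruns the convergence analysis; the paper's route collapses the whole theorem to a two-line reduction once the rootedness fact is quoted.
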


\begin{proof}
By (\ref{eq:Q-resilient}), $\{z_{i,u}^n(k)\}$ evolves as
 \begin{eqnarray}
    \label{eq:z-resilient}
        z_{i,u}^n(k+1)\!\!\!\!&=&\!\!\!\!\hat{\omega}_{i,u}^{nn}(k)z_{i,u}^n(k)+\!\!\!\!\sum_{v_l\in\mathcal{J}_{i,u}^n(T_{i,u}(k))}\hat{\omega}_{i,u}^{nl}(k)z_{i,u}^l(k)\nn\\
        &&\!\!\!\!-a_kd_{i,u}^n(\mathbf{z}^n_k),
    \end{eqnarray}
where $\hat{\omega}_{i,u}^{nn}(k)=1-b|\mathcal{J}_{i,u}^n(T_{i,u}(k))|$, $\hat{\omega}_{i,u}^{nl}(k)=b$, $v_l\in\mathcal{N}_n$ and $d_{i,u}^n(\mathbf{z}^n_k)=z_{i,u}^n(k)-\mathcal{G}_{i,u}^n(\mathbf{z}_k^n)$ with $\mathbf{z}_k^n\in\mathbb{R}^{|\mathcal{X}\times\mathcal{U}|}$ whose components are $z_{i,u}^n(k)$. 

By Lemma \ref{lem: equivalent}, the update rule (\ref{eq:z-resilient}) for each $v_n\in\mathcal{R}$ is equivalent to
 \begin{eqnarray}
    \label{eq:z-resilient-equivalent}
        z_{i,u}^n(k+1)\!\!\!\!&=&\!\!\!\!\bar{\omega}_{i,u}^{nn}(k)z_{i,u}^n(k)+\sum_{v_l\in\mathcal{N}_n\cap\mathcal{R}}\bar{\omega}_{i,u}^{nl}(k)z_{i,u}^l(k)\nn\\
        &&\!\!\!\!-a_kd_{i,u}^n(\mathbf{z}^n_k),
    \end{eqnarray}
where the weights $\bar{\omega}_{i,u}^{nl}(k)$ are nonnegative and satisfy the following properties:
\begin{itemize}
    \item $\bar{\omega}_{i,u}^{nn}(k)+\sum\limits_{v_l\in\mathcal{N}_n\cap\mathcal{R}}\bar{\omega}_{nl}(k)=1$,
    \item $\bar{\omega}_{i,u}^{nn}(k)\ge\eta$ and at least $|\mathcal{N}_n|-2F$ of other weights are lower bounded by $\frac{\eta}{2}$.
\end{itemize}
Without loss of generality, we assume that the regular nodes are arranged first in the ordering of the nodes.
Let $\mathbf{z}_{i,u}^\mathcal{R}(k)=[z_{i,u}^1(k)\;\cdots\;z_{i,u}^{|\mathcal{R}|}(k)]\T$. Then, we have 
\begin{equation}
\label{eq:Q-A-resilient}
\mathbf{z}_{i,u}^\mathcal{R}(k+1)=\bar{A}_{i,u}(k)\mathbf{z}_{i,u}^\mathcal{R}(k)-a_k\mathbf{d}_{i,u}^\mathcal{R}(\mathbf{z}_k^\mathcal{R}),
\end{equation}
where $\bar{A}_{i,u}(k)\in\mathbb{R}^{|\mathcal{R}|\times|\mathcal{R}|}$ is a matrix whose $(n,l)$-th entry is $\bar{\omega}_{i,u}^{nl}(k)$ and $\mathbf{d}_{i,u}^\mathcal{R}(\mathbf{z}_k^\mathcal{R})=[d_{i,u}^1(\mathbf{z}^1_k)\;\cdots\;d_{i,u}^{|\mathcal{R}|}(\mathbf{z}^{|\mathcal{R}|}_k)]\T$.

Consider the graph $\mathcal{G}$, and remove all edges whose weights are smaller than $\frac{\eta}{2}$ in $\bar{A}_{i,u}(k)$. By Lemma 2.3 in \cite{sundaram2018distributed}, the subgraph consisting of regular nodes will be rooted after removing $2F$ or fewer edges from each regular nodes if the graph is $(2F+1)$-robust. Thus, $\bar{A}_{i,u}(k)$ is rooted for each $k\in\mathbb{N}$, with a tree whose edge-weights are all lower-bounded by $\frac{\eta}{2}$. Thus, equation (\ref{eq:Q-A-resilient}) is in the same form of equation (\ref{eq:z}). Following the same steps in the proof of Proposition \ref{prop:QD directed}, we can establish Theorem \ref{thm-resilient}.
\end{proof}

\begin{remark}
 Regardless of the behavior of any set of Byzantine agents, the error between the value function $\mathbf{V}_t^n$ of each regular agent $v_n$ and the optimal value function $\mathbf{V}^*$ can be further bounded by $R\le\max_{v_n,v_l\in\mathcal{R}}\frac{1}{1-\gamma}\|\mathbb{E}[\mathbf{c}_n]-\mathbb{E}[\mathbf{c}_l]\|_\infty$, where $\mathbf{c}_n=[c_n(i,u)]\in\mathbb{R}^{|\mathcal{X}\times\mathcal{U}|}$. Roughly speaking, $R$ becomes smaller as the local optimal value functions/costs of regular agents get closer. In particular, if all regular agents own the same local optimal value functions/costs, $R=0$. 
\end{remark}

\begin{remark}
Equations (\ref{RQD:upper bound}) and (\ref{RQD:lower bound}) further imply 
\[\mathbb{P}\big(\limsup\limits_{t\to\infty}\|\mathbf{Q}^n_t\|_\infty\le\max_{v_n\in\mathcal{R}}\|\mathbf{Q}^{n*}\|_\infty\big)=1,\;\forall v_n\in\mathcal{R}.\]
More specifically, unlike standard (optimal) distributed learning algorithms that can be arbitrarily co-opted by an adversary (Proposition \ref{prop: co-opted}), in the long run, the $Q$ values of each regular agent will be bounded by the largest maximum norm of local optimal $Q$ values among all regular agents under our algorithm, regardless of the behaviors of any F-local set of Byzantine agents, 
\end{remark}

\begin{remark}
The adversary model we consider is the $F$-local model, which is more general than the  $F$-total model considered in \cite{wu2020byzantine}. In particular, the $F$-total model indicates that there are no more than $F$ Byzantine nodes in the {\it entire network}, whereas we allow up to $F$ Byzantine nodes in the neighborhood of {\it every regular node}.
\end{remark}

\section{Conclusion}
\label{sec: conclusion}
We developed a resilient distributed $Q$-learning algorithm for a networked system in the presence of Byzantine agents. Under certain conditions on the network topology, we established the almost sure convergence of the value function of each regular agent to the neighborhood of the optimal value function of all regular agents. For each state, if the optimal $Q$-values corresponding to different actions are sufficiently separated, our algorithm allows each regular agent to learn the optimal policy of all regular agents.


\bibliography{rqd}
\bibliographystyle{IEEEtran}

\appendix
\subsection{Preliminary Lemmas}
\begin{lem}
	\label{lem convergence}
	For each state-action pair $(i,u)$, let $\{\mathbf{y}_{i,u}(t)\}$ denote the $\{\mathcal{F}_t\}$ adapted process evolving as 
\begin{eqnarray}
\mathbf{y}_{i,u}(t+1)\!\!\!\!&=&\!\!\!\!(I_N-\beta_{i,u}(t)L(t)-\alpha_{i,u}(t)I_N)\mathbf{y}_{i,u}(t)\nn\\
&&\!\!\!\!+\alpha_{i,u}(t)\bar{\boldsymbol{\nu}}_{i,u}(t),
\end{eqnarray}
where the weighted sequences $\{\alpha_{i,u}\}$ and $\{\beta_{i,u}\}$ are given by (\ref{eq:alpha directed}) and (\ref{eq:beta directed}) and $\{\bar{\boldsymbol{\nu}}_{i,u}(t)\}$ is an $\{\mathcal{F}_{t+1}\}$ adapted process satisfying $\mathbb{E}[\bar{\boldsymbol{\nu}}_{i,u}(t)|\mathcal{F}_t]=\boldsymbol{0}$ for all $t$. Then, under Assumption \ref{assum: stopping time}, we have $\mathbf{y}_{i,u}\to 0$ as $t\to\infty$ a.s..
\end{lem}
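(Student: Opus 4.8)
The plan is to pass from the time-indexed recursion to the sequence sampled at the update instants $\{T_{i,u}(k)\}$, observe that this sampled recursion is a linear stochastic recursion whose transition matrices are uniformly $\ell_\infty$-contractive with contraction factors $1-a_k$, and then decompose its solution into a homogeneous part, which vanishes because $\sum_k a_k=\infty$, and a martingale-noise part, which vanishes by a mean-square estimate built on the same contraction kernel, followed by a maximal-inequality upgrade to almost-sure convergence.

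First I would use that $\alpha_{i,u}(t)=\beta_{i,u}(t)=0$ for every $t\notin\{T_{i,u}(k):k\ge 0\}$, so $\{\mathbf{y}_{i,u}(t)\}$ is piecewise constant and changes only at the sampling instants; under Assumption~\ref{assum: stopping time} the $T_{i,u}(k)$ are finite a.s.\ and strictly increasing, hence $T_{i,u}(k)\to\infty$, so $\mathbf{y}_{i,u}(t)\to 0$ a.s.\ is equivalent to $\mathbf{w}(k):=\mathbf{y}_{i,u}(T_{i,u}(k))\to 0$ a.s. The sampled sequence obeys $\mathbf{w}(k+1)=B_k\mathbf{w}(k)+a_k\boldsymbol{\xi}(k)$, with $B_k:=I_N-bL(T_{i,u}(k))-a_kI_N$ and $\boldsymbol{\xi}(k):=\bar{\boldsymbol{\nu}}_{i,u}(T_{i,u}(k))$; by optional sampling (using $T_{i,u}(k)<\infty$ a.s.) $\{\boldsymbol{\xi}(k)\}$ is a martingale difference w.r.t.\ $\mathcal{H}_k:=\mathcal{F}_{T_{i,u}(k)}$, while $B_k$ is $\mathcal{H}_k$-measurable. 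Because $b<\frac{1-\eta}{N-1}$ and $a_k\le\eta$, each $B_k$ is entrywise nonnegative (its diagonal entry $1-b|\mathcal{N}_n(T_{i,u}(k))|-a_k$ is $>1-(1-\eta)-\eta=0$) with every row summing to $1-a_k$; hence $\|B_k\|_\infty=1-a_k$, so the products $\Psi(k,s):=B_{k-1}B_{k-2}\cdots B_s$ satisfy $\|\Psi(k,s)\|_\infty\le\prod_{j=s}^{k-1}(1-a_j)$, which $\to 0$ as $k\to\infty$ for each fixed $s$ since $\sum_j a_j=\infty$.

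Writing $\mathbf{w}(k)=\Psi(k,0)\mathbf{w}(0)+\mathbf{m}(k)$ with $\mathbf{m}(k):=\sum_{j=0}^{k-1}a_j\Psi(k,j+1)\boldsymbol{\xi}(j)$, the first term vanishes by the above. For $\mathbf{m}(k)$ I would fix $k_0$, split $\mathbf{m}(k)=\Psi(k,k_0)\mathbf{m}(k_0)+\mathbf{n}_{k_0}(k)$ with $\mathbf{n}_{k_0}(k):=\sum_{j=k_0}^{k-1}a_j\Psi(k,j+1)\boldsymbol{\xi}(j)$: the first summand $\to 0$ for fixed $k_0$, and the tail is controlled in mean square. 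Using the nonnegativity of $\Psi(k,j+1)$, the fact that each of its rows sums to $\rho_{j+1,k}:=\prod_{i=j+1}^{k-1}(1-a_i)\le 1$, a uniform conditional second-moment bound $\mathbb{E}[\|\boldsymbol{\xi}(j)\|_\infty^2\mid\mathcal{H}_j]\le\sigma^2$, the martingale-difference property (which kills the cross terms after a standard decomposition of $\Psi$ into a part measurable up to the relevant index and a nonnegative residual of norm at most the corresponding $\rho$), and the telescoping identity $a_j\rho_{j+1,k}=\rho_{j+1,k}-\rho_{j,k}$, one obtains
\[
\mathbb{E}\big\|\mathbf{n}_{k_0}(k)\big\|_\infty^2\;\le\;\sigma^2\sum_{j=k_0}^{k-1}a_j^2\rho_{j+1,k}^2\;\le\;\sigma^2\sum_{j=k_0}^{k-1}a_j\,(a_j\rho_{j+1,k})\;\le\;\sigma^2\sup_{j\ge k_0}a_j,
\]
and $\sup_{j\ge k_0}a_j\to 0$ as $k_0\to\infty$ since $a_j\to 0$; note this uses only $\sum_j a_j=\infty$ (through $\rho$) and $a_j\to 0$, not $\sum_j a_j^2<\infty$. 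Hence $\mathbf{m}(k)\to 0$ in $L^2$, and a Doob-type maximal inequality for the weighted martingale along the same kernel upgrades this to $\mathbf{m}(k)\to 0$ a.s.; combining, $\mathbf{w}(k)\to 0$ a.s., i.e.\ $\mathbf{y}_{i,u}(t)\to 0$ a.s.

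I expect the noise term to be the main obstacle. Two features block the standard machinery: the step sizes are \emph{not} square-summable (only $a_k\to 0$, $\sum_k a_k=\infty$, $a_{k-1}/a_k\to 1$), so the usual Robbins--Siegmund / martingale-convergence shortcut is unavailable; and for time-varying \emph{directed} graphs the $B_k$ share no common left eigenvector, so there is no convenient quadratic Lyapunov function and one cannot collapse the analysis onto a scalar recursion (in $\ell_\infty$ the contraction is clean but the mean-zero structure of the noise is lost; in $\ell_2$ the mean-zero structure helps but column sums of $B_k$ can exceed one). The argument therefore has to lean on the fact that a constant consensus weight $b$ together with the leak term $-a_kI_N$ makes $\|B_k\|_\infty=1-a_k$ a genuine \emph{uniform} contraction whose accumulated kernel $\prod(1-a_j)$ behaves like a summable weight sequence; getting the cross-terms of the martingale sum to vanish despite the kernel's forward dependence is the delicate point, handled exactly as in the mixed-time-scale estimates of \cite{kar2013cal}. (The uniform conditional second-moment bound on $\bar{\boldsymbol{\nu}}_{i,u}$ used above holds in every invocation of this lemma once pathwise boundedness of the $Q$-iterates is in hand, though if the costs $c_n(i,u)$ are only assumed integrable one additionally needs a truncation plus Borel--Cantelli argument.)
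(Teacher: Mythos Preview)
Your setup matches the paper's: both pass to the sampled process $\mathbf{z}_{i,u}(k)=\mathbf{y}_{i,u}(T_{i,u}(k))$, write $\mathbf{z}_{i,u}(k+1)=(A_{i,u}^k-a_kI)\mathbf{z}_{i,u}(k)+a_k\bar{\boldsymbol{\nu}}_k$, and exploit that $A_{i,u}^k-a_kI$ is entrywise nonnegative with row sums $1-a_k$. From there the paper is far terser than you. It does \emph{not} split homogeneous from noise, run an $L^2$ estimate, or invoke a maximal inequality; it simply takes the one-step conditional expectation $\mathbb{E}[\mathbf{z}_{i,u}(k+1)\mid\mathcal{H}_{i,u}^k]=(A_{i,u}^k-a_kI)\mathbf{z}_{i,u}(k)$, iterates the $\ell_\infty$ contraction to put the factor $\prod_{s=0}^{k}(1-a_s)$ in front of $\mathbf{z}_{i,u}(0)$, uses $\sum_k a_k=\infty$ to drive this product to zero, and then reads off a.s.\ convergence from the identification $\mathbb{E}[\mathbf{z}_{i,u}(k)\mid\mathcal{H}_{i,u}^k]=\mathbf{z}_{i,u}(k)$. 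Your decomposition-plus-martingale-estimate is the more conventional stochastic-approximation route and, were it to close, would be more transparent about where the noise actually goes.

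The gap you flag, however, is real and is not discharged by the reference you give. Your displayed bound $\mathbb{E}\|\mathbf{n}_{k_0}(k)\|_\infty^2\le\sigma^2\sum_ja_j^2\rho_{j+1,k}^2$ needs the cross terms to vanish, but the kernel $\Psi(k,j+1)=B_{k-1}\cdots B_{j+1}$ depends on the future stopping times $T_{i,u}(j+1),\ldots,T_{i,u}(k-1)$ and is therefore neither $\mathcal{H}_j$-measurable nor conditionally independent of $\boldsymbol{\xi}(j)$ (the next state enters $\boldsymbol{\xi}(j)$ and simultaneously drives the next hitting time). In \cite{kar2013cal} the analogous estimate succeeds because the graph there is \emph{undirected}: the Laplacian is symmetric, the whole argument runs in $\ell_2$ with a genuine spectral contraction, and the weighted noise sum is an honest $L^2$ martingale whose increments are orthogonal. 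That mechanism is precisely what fails for time-varying directed graphs, so appealing to \cite{kar2013cal} does not fill the hole. A secondary point: the lemma only hypothesizes $\mathbb{E}[\bar{\boldsymbol{\nu}}_{i,u}(t)\mid\mathcal{F}_t]=\mathbf{0}$, so the uniform conditional second-moment bound $\sigma^2$ you use is extra structure imported from the downstream application, not part of the statement being proved.
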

\begin{proof}
For each $k\ge 0$, let $\mathcal{H}_{i,u}^k$ be the $\sigma$-algebra associated with the stopping time $T_{i,u}(k)$, \ie, $\mathcal{H}_{i,u}^k=\mathcal{F}_{T_{i,u}(k)}$. Let $\{\mathbf{z}_{i,u}(k)\}$ denote the randomly sampled version of $\{\mathbf{y}_{i,u}(t)\}$, \ie, $\mathbf{z}_{i,u}(k)=\mathbf{y}_{i,u}(T_{i,u}(k))$, $\forall k$.
The process $\{\mathbf{z}_{i,u}(k)\}$ is $\{\mathcal{H}_{i,u}^k\}$ adapted, which evolves as 
\[ \mathbf{z}_{i,u}(k+1)=A_{i,u}^k\mathbf{z}_{i,u}(k)-a_k\mathbf{z}_{i,u}(k)+a_k\bar{\boldsymbol{\nu}}_k.  \]
Then,
\[\mathbb{E}[\mathbf{z}_{i,u}(k+1)|\mathcal{H}_{i,u}^k]=\mathbb{E}[A_{i,u}^k-a_k I]\mathbb{E}[\mathbf{z}_{i,u}(k)|\mathcal{H}_{i,u}^k],\]
and
\[\|\mathbb{E}[\mathbf{z}_{i,u}(k+1)|\mathcal{H}_{i,u}^k]\|_{\infty}\le\prod_{s=0}^{k}(1-a_k)\|\mathbb{E}[\mathbf{z}_{i,u}(0)|\mathcal{H}_{i,u}^k]\|_\infty.\]
Since $\sum_{k\ge 0}a_k=\infty$, we obtain that $\lim_{k\to\infty}\prod_{s=0}^{k}(1-a_k)\le 0$, and, hence, $\lim_{k\to\infty}\|\mathbb{E}[\mathbf{z}_{i,u}(k)|\mathcal{H}_{i,u}^k]\|=0$, which indicates $\lim_{k\to\infty}\mathbf{z}_{i,u}(k)=0$ \as. Since $\{\mathbf{y}_{i,u}(t)\}$ is a piecewise constant interpolation of $\{\mathbf{z}_{i,u}(k)\}$, we obtain $\mathbb{P}(\lim_{t\to\infty}\|\mathbf{y}_{i,u}(t)\|=0)=1$. 
\end{proof}

\begin{lem}
\label{lem: coro}
	For each state-action pair $(i,u)$ and $t_0\ge 0$, consider the process $\{\mathbf{z}_{i,u}(t:t_0)\}_{t\ge t_0}$ that evolves as 
	\begin{eqnarray*}
	\mathbf{z}_{i,u}(t+1:t_0)\!\!\!\!&=&\!\!\!\!(I_N-\beta_{i,u}(t)L(t)-\alpha_{i,u}(t)I_N)\mathbf{z}_{i,u}(t:t_0)\\
    &&\!\!\!\!+\alpha_{i,u}(t)\bar{\boldsymbol{\nu}}_{i,u}(t), 
	\end{eqnarray*}
	with $\mathbf{z}_{i,u}(t_0:t_0)=\boldsymbol{0}$, where $\alpha_{i,u}(t)$, $\beta_{i,u}(t)$	and $\bar{\boldsymbol{\nu}}_{i,u}(t)$ satisfy the hypothesis of Lemma \ref{lem convergence}. Then, for each $\varepsilon>0$, there exists a random time $t_{\varepsilon}$ such that $\|\mathbf{z}_{i,u}(t:t_0)\|_\infty\le\varepsilon$, $t_\varepsilon\le t_0\le t$.
\end{lem}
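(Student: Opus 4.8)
The plan is to use the linearity of the recursion to express the tail process $\{\mathbf{z}_{i,u}(t:t_0)\}$ in terms of the process started at time $0$, and then invoke Lemma~\ref{lem convergence}. Write the recursion compactly as $\mathbf{z}_{i,u}(t+1:t_0)=B_{i,u}(t)\mathbf{z}_{i,u}(t:t_0)+\alpha_{i,u}(t)\bar{\boldsymbol{\nu}}_{i,u}(t)$ with $B_{i,u}(t)=I_N-\beta_{i,u}(t)L(t)-\alpha_{i,u}(t)I_N$, and let $\Psi_{i,u}(t,s)=B_{i,u}(t-1)B_{i,u}(t-2)\cdots B_{i,u}(s)$ for $t>s$ and $\Psi_{i,u}(t,t)=I_N$. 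Unrolling the recursion from the initial condition $\mathbf{z}_{i,u}(t_0:t_0)=\boldsymbol{0}$ gives $\mathbf{z}_{i,u}(t:t_0)=\sum_{s=t_0}^{t-1}\Psi_{i,u}(t,s+1)\alpha_{i,u}(s)\bar{\boldsymbol{\nu}}_{i,u}(s)$; subtracting this from the corresponding expression with $t_0$ replaced by $0$ and factoring out $\Psi_{i,u}(t,t_0)$ yields the pathwise identity
\[
\mathbf{z}_{i,u}(t:t_0)=\mathbf{z}_{i,u}(t:0)-\Psi_{i,u}(t,t_0)\,\mathbf{z}_{i,u}(t_0:0),\qquad t\ge t_0\ge 0.
\]

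The next step is to show $\|\Psi_{i,u}(t,t_0)\|_\infty\le 1$. At a non-sampling time $B_{i,u}(t)=I_N$, while at $t=T_{i,u}(k)$ we have $B_{i,u}(t)=A_{i,u}^k-a_kI_N$, whose $(n,n)$-entry equals $1-b|\mathcal{N}_n(t)|-a_k$ and whose off-diagonal entries are $0$ or $b>0$. Since $b<\frac{1-\eta}{N-1}$, $|\mathcal{N}_n(t)|\le N-1$ and $a_k\le\eta$, each diagonal entry is nonnegative, so $B_{i,u}(t)$ has nonnegative entries with every row summing to $1-a_k\le 1$; hence $\|B_{i,u}(t)\|_\infty\le 1$ for all $t$, and submultiplicativity of the induced norm gives $\|\Psi_{i,u}(t,t_0)\|_\infty\le 1$ for all $t\ge t_0$.

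Finally, the process $\{\mathbf{z}_{i,u}(t:0)\}_{t\ge 0}$ is exactly the process $\{\mathbf{y}_{i,u}(t)\}$ of Lemma~\ref{lem convergence} started from $\mathbf{y}_{i,u}(0)=\boldsymbol{0}$: it is $\{\mathcal{F}_t\}$-adapted and satisfies the same recursion and the same noise hypotheses, so $\mathbf{z}_{i,u}(t:0)\to\boldsymbol{0}$ \as\ as $t\to\infty$. Given $\varepsilon>0$, almost sure convergence provides a random time $t_\varepsilon$, \emph{not depending on} $t_0$, such that $\|\mathbf{z}_{i,u}(t:0)\|_\infty\le\varepsilon/2$ for every $t\ge t_\varepsilon$. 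Combining this with the identity and the norm bound above, for any $t_0,t$ with $t_\varepsilon\le t_0\le t$ we get
\[
\|\mathbf{z}_{i,u}(t:t_0)\|_\infty\le\|\mathbf{z}_{i,u}(t:0)\|_\infty+\|\Psi_{i,u}(t,t_0)\|_\infty\,\|\mathbf{z}_{i,u}(t_0:0)\|_\infty\le\tfrac{\varepsilon}{2}+\tfrac{\varepsilon}{2}=\varepsilon,
\]
which is the assertion.

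I expect the main obstacle to be the non-expansiveness step: one must verify that the constraints on the step sizes and edge weights ($a_k\in(0,\eta]$, $b\in[\eta,\frac{1-\eta}{N-1})$, $\eta\in(0,\tfrac1N]$) keep each $B_{i,u}(t)$ substochastic in the $\infty$-norm, so that $\Psi_{i,u}(t,t_0)$ never amplifies the residual term $\mathbf{z}_{i,u}(t_0:0)$. Everything else — the linear solution formula, the decomposition identity, and the appeal to Lemma~\ref{lem convergence} — is routine bookkeeping.
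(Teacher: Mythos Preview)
Your proof is correct and follows essentially the same route as the paper: the paper also writes $S_{i,u}(t)=I_N-\beta_{i,u}(t)L(t)-\alpha_{i,u}(t)I_N$, verifies nonnegativity via $1-a_k-b(N-1)>0$ so that $\|S_{i,u}(t)\|_\infty=1-\alpha_{i,u}(t)\le 1$, uses the same decomposition $\mathbf{z}_{i,u}(t:t_0)=\mathbf{z}_{i,u}(t:0)-\prod_{s=t_0}^{t-1}S_{i,u}(s)\,\mathbf{z}_{i,u}(t_0:0)$, and then invokes Lemma~\ref{lem convergence} to bound each term by $\varepsilon/2$. Your write-up is somewhat more explicit about the substochasticity check and the transition-matrix bookkeeping, but the argument is the same.
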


\begin{proof}
Since $1-\alpha_{i,u}(t)-\beta_{i,u}(t)|\mathcal{N}_n(t)|\ge 1-a_k-b(N-1)>0$, the matrix $S_{i,u}(t):=I_N-\beta_{i,u}(t)L(t)-\alpha_{i,u}(t)I_N$ is nonnegative. Thus, $\|S_{i,u}(t)\|_\infty=1-\alpha_{i,u}(t)$, $\forall t$.  
Then, for each $t\ge t_0$,
\begin{multline*}
 \|\mathbf{z}_{i,u}(t:t_0)\|_\infty=\Big\|\mathbf{z}_{i,u}(t:0)-\prod_{s=t_0}^{t-1}S_{i,u}(s)\mathbf{z}_{i,u}(t_0:0)\Big\|_\infty\\\le\left\|\mathbf{z}_{i,u}(t:0)\|_\infty+\|\mathbf{z}_{i,u}(t_0:0)\right\|_\infty.
\end{multline*}
By Lemma \ref{lem convergence}, $\mathbf{z}_{i,u}(t:0)\to\boldsymbol{0}$ as $t\to\infty$ a.s., and, hence, there exists $t_\varepsilon$ such that $\|\mathbf{z}_{i,u}(t:0)\|_\infty\le\frac{\varepsilon}{2}$ for $t\ge t_\varepsilon$. The result follows immediately. 
\end{proof}

\begin{lem}
\label{lem lower bound}
    Let $\{z_t\}$ be a real-valued and deterministic sequence with $z_{t+1}\ge(1-\alpha_t)z_t+\alpha_t\varepsilon_t$,
    where $\alpha_t\in(0,\eta]$ for all $t$ and $\eta\in(0,1)$, $\sum_{t\ge 0}\alpha_t=\infty$, and there exists a constant $R>0$ such that $\liminf_{t\to\infty}\varepsilon_t\ge R$.
    Then, $\liminf_{t\to\infty} z_t\ge R$.
\end{lem}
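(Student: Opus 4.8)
The plan is to reduce the statement to the elementary fact that $\prod_s(1-\alpha_s)\to 0$ whenever $\sum_s\alpha_s=\infty$, by a change of variables that absorbs the target level $R$. First I would fix an arbitrary $\delta>0$ and invoke the hypothesis $\liminf_{t\to\infty}\varepsilon_t\ge R$ to choose $T$ with $\varepsilon_t\ge R-\delta$ for all $t\ge T$. Substituting this bound into the given recursion yields, for every $t\ge T$,
\[
z_{t+1}\ge(1-\alpha_t)z_t+\alpha_t(R-\delta),
\]
and setting $w_t:=z_t-(R-\delta)$ turns this into the homogeneous inequality $w_{t+1}\ge(1-\alpha_t)w_t$ for all $t\ge T$.

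Next I would show $\liminf_{t\to\infty}w_t\ge 0$ via a dichotomy on the sign of $w_t$. If $w_{t_1}\ge 0$ for some $t_1\ge T$, then since $\alpha_t\in(0,\eta]$ with $\eta<1$ we have $1-\alpha_t\in[1-\eta,1)\subset(0,1)$, so an induction gives $w_t\ge(1-\alpha_{t-1})w_{t-1}\ge 0$ for all $t\ge t_1$; hence $\liminf_{t\to\infty}w_t\ge 0$. Otherwise $w_t<0$ for every $t\ge T$; then $w_{t+1}<0$ as well, and multiplying $w_{t+1}\ge(1-\alpha_t)w_t$ by $-1$ (using $w_t<0$ and $1-\alpha_t>0$) gives $|w_{t+1}|\le(1-\alpha_t)|w_t|$, so $|w_t|\le|w_T|\prod_{s=T}^{t-1}(1-\alpha_s)$. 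Because $\sum_{s\ge 0}\alpha_s=\infty$, the elementary bound $1-x\le e^{-x}$ yields $\prod_{s=T}^{t-1}(1-\alpha_s)\le\exp\!\big(-\sum_{s=T}^{t-1}\alpha_s\big)\to 0$ as $t\to\infty$, so $w_t\to 0$ and again $\liminf_{t\to\infty}w_t\ge 0$.

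Combining the two cases, $\liminf_{t\to\infty}z_t\ge R-\delta$, and since $\delta>0$ was arbitrary, letting $\delta\downarrow 0$ completes the proof. The only mildly delicate point is the case split: one cannot assume a priori that $w_t$ keeps a fixed sign, and the ``once nonnegative, stays nonnegative'' step relies on $\alpha_t<1$ (guaranteed by $\eta<1$). Beyond that the argument is routine, so I anticipate no substantive obstacle.
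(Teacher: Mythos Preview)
Your proof is correct and follows essentially the same strategy as the paper: fix $\delta>0$, shift by $R-\delta$ to obtain the homogeneous inequality $\hat z_{t+1}\ge(1-\alpha_t)\hat z_t$, use $\sum_t\alpha_t=\infty$ to force $\prod_s(1-\alpha_s)\to 0$, conclude $\liminf\hat z_t\ge 0$, and finally let $\delta\downarrow 0$. The only cosmetic difference is that the paper iterates the homogeneous inequality directly (yielding $\hat z_t\ge\hat z_{t_\delta}\prod_{s=t_\delta}^{t-1}(1-\alpha_s)$ regardless of sign, since each factor $1-\alpha_s>0$), whereas you insert a sign dichotomy; your case split is unnecessary but harmless, and your use of $1-x\le e^{-x}$ to kill the product is in fact cleaner than the paper's somewhat garbled logarithm estimate.
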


\begin{proof}
Consider $\delta>0$ and note that, by hypothesis, there exists $t_\delta>0$ such that $\varepsilon_t\ge R-\delta$ for all $t\ge t_\delta$. Hence, for $t\ge t_{\delta}$, we have 
\[z_{t+1}\ge(1-\alpha_t)z_t+\alpha_t(R-\delta).\] Let $\hat{z}_t=z_t-(R-\delta)$ for all $t$. We have, for $t\ge t_{\delta}$, $\hat{z}_{t+1}\ge(1-\alpha_t)\hat{z}_t$. Since $\sum_{t\ge 0}\alpha_t=\infty$ and $\ln{(1-\alpha_t)}\ge-\frac{1}{1-\eta}\alpha_t,\;\alpha_t\in(0,\eta]$, we conclude that $\liminf_{t\to\infty}\prod_{s=t_\delta}^{t-1}(1-\alpha_s)\ge 0$, and, hence, $\liminf_{t\to\infty}\hat{z}_t\ge 0$. Thus, $\liminf_{t\to\infty}z_t\ge R-\delta$, from which the assertion follows by taking $\delta$ to $0$. 
\end{proof}

\subsection{Proof of Proposition \ref{prop: boundedness}}

With Lemma \ref{lem: coro}, following a similar analysis as Lemma 5.1 in \cite{kar2013cal}, we can obtain Proposition \ref{prop: boundedness}.

\subsection{Proof of Proposition \ref{prop: consensus}}
Equation (\ref{eq:z}) can be written as
\begin{multline*}
    \mathbf{z}_{i,u}(k+1)=\Phi_{i,u}(k,0)\mathbf{z}_{i,u}(0)\\
    -\sum_{r=0}^{k-1}a_r\Phi_{i,u}(k,r+1)\bar{\mathbf{d}}_{i,u}(\mathbf{z}_r)
    -a_k\bar{\mathbf{d}}_{i,u}(\mathbf{z}_k).
\end{multline*}
The residual $\mathbf{z}_{i,u}(k+1)-\boldsymbol{1}\mathbf{q}_{i,u}\T(k+1)\mathbf{z}_{i,u}(k+1)$ evolves as 
\begin{eqnarray}
\label{residue}
&&\!\!\!\!\!\!\!\!\!\!\!\!\!\!\!\!\!\!\!\!\mathbf{z}_{i,u}(k+1)-\boldsymbol{1}\mathbf{q}_{i,u}\T(k+1)\mathbf{z}_{i,u}(k+1)\nn\\
&=&\!\!\!\!(\Phi_{i,u}(k,0)-\boldsymbol{1}\mathbf{q}_{i,u}\T(0))\mathbf{z}_{i,u}(0)\nn\\
&&\!\!\!\!-\sum_{r=0}^{k-1}a_r(\Phi_{i,u}(k,r+1)-\boldsymbol{1}\mathbf{q}_{i,u}\T(r+1))\bar{\mathbf{d}}_{i,u}(\mathbf{z}_r)\nn\\
&&\!\!\!\!-a_k(I-\boldsymbol{1}\mathbf{q}_{i,u}\T(k+1))\bar{\mathbf{d}}_{i,u}(\mathbf{z}_k).
\end{eqnarray}
The boundedness of $\bar{\mathbf{d}}_{i,u}(\mathbf{z}_k)$ are implied by the boundedness of $\mathbf{Q}_t^n$ by Proposition \ref{prop: boundedness}. Along with
$\lim_{k\to\infty}\Phi_{i,u}(k,s)=\boldsymbol{1}\mathbf{q}_{i,u}\T(s)$ and $a_k\to 0$ as $k\to\infty$, we conclude that $\limsup_{k\to\infty}\|\mathbf{z}_{i,u}(k)-\boldsymbol{1}\mathbf{q}_{i,u}\T(k)\mathbf{z}_{i,u}(k)\|=0$. Since $\bar{Q}_{i,u}^n(t)$ is a piecewise constant interpolation of $z_{i,u}^n(k)$ and $\bar{Q}_{i,u}^n(t)=\mathbb{E}[Q_{i,u}^n(t)|\mathcal{F}_t]$, the desired assertion follows.

\subsection{Proof of Proposition \ref{prop:QD directed}}

By Proposition \ref{prop: consensus}, agents reach consensus asymptotically, which indicates that \[\limsup_{k\to\infty}\Big\|\boldsymbol{1}\mathbf{q}_{i,u}\T(k)\mathbf{z}_{i,u}(k)-\frac{1}{N}\boldsymbol{1}\T\mathbf{z}_{i,u}(k)\Big\|=0.\] 
We next estimate $\frac{1}{N}\boldsymbol{1}\T\mathbf{z}_{i,u}(k+1)-Q_{i,u}^*$. By (\ref{eq:z}),  
\begin{eqnarray*}
&&\!\!\!\!\!\!\!\!\!\!\!\!\!\!\!\!\frac{1}{N}\boldsymbol{1}\T\mathbf{z}_{i,u}(k+1)\\
&=&\!\!\!\!\frac{1}{N}\boldsymbol{1}\T(I-bL^k_{i,u})\mathbf{z}_{i,u}(k)-a_k\frac{1}{N}\boldsymbol{1}\T\bar{\mathbf{d}}_{i,u}(\mathbf{z}_k)\\
&=&\!\!\!\!(1-a_k)\frac{1}{N}\boldsymbol{1}\T\mathbf{z}_{i,u}(k)+\frac{a_k}{N}\boldsymbol{1}\T\mathcal{G}_{i,u}(\mathbf{z}_k)\\
&&\!\!\!\!-\frac{b}{N}\boldsymbol{1}\T L^k_{i,u}\mathbf{z}_{i,u}(k),
\end{eqnarray*}
from which, we obtain that
\begin{eqnarray}
\label{eq:error}
&&\!\!\!\!\!\!\!\!\!\!\!\!\!\!\!\frac{1}{N}\boldsymbol{1}\T\mathbf{z}_{i,u}(k+1)-Q_{i,u}^*\nn\\
&=&\!\!\!\!\!a_k\Big(\frac{1}{N}\boldsymbol{1}\T\mathcal{G}_{i,u}(\mathbf{z}_k)-\bar{\mathcal{G}}_{i,u}(\mathbf{Q}^*)-\frac{b}{a_k}\frac{1}{N}\boldsymbol{1}\T L^k_{i,u}\mathbf{z}_{i,u}(k)\Big)\nn\\
&&\!\!\!\!\!+(1-a_k)\Big(\frac{1}{N}\boldsymbol{1}\T\mathbf{z}_{i,u}(k)-Q_{i,u}^*\Big).
\end{eqnarray}
In the above equation,
\begin{eqnarray*}
&&\!\!\!\!\!\!\!\!\!\!\!\!\!\!\frac{1}{N}\boldsymbol{1}\T\mathcal{G}_{i,u}(\mathbf{z}_k)-\bar{\mathcal{G}}_{i,u}(\mathbf{Q}^*)\nn\\
&=&\!\!\!\!\gamma p_{ij}^u\sum_{j\in\mathcal{X}}\Big(\frac{1}{N}\sum_{n=1}^N \min_{v\in\mathcal{U}}z_{j,v}^n(k)-\min_{v\in\mathcal{U}}\frac{1}{N}\boldsymbol{1}\T\mathbf{z}_{j,v}(k)\Big)\nn\\
&&\!\!\!\!+\gamma p_{ij}^u\sum_{j\in\mathcal{X}}\frac{1}{N}\sum_{n=1}^N \big(\min_{v\in\mathcal{U}}z_{j,v}^n(k)-\min_{v\in\mathcal{U}}Q_{j,v}^*\big).
\end{eqnarray*}
Since all agents reach consensus asymptotically, we have
\[\lim_{k\to\infty}\Big|\frac{1}{N}\sum_{n=1}^N \min_{v\in\mathcal{U}}z_{j,v}^n(k)-\min_{v\in\mathcal{U}}\frac{1}{N}\boldsymbol{1}\T\mathbf{z}_{j,v}(k)\Big|=0.\]
Thus, from (\ref{eq:error}), we have
\[\limsup_{k\to\infty}\frac{1}{N}\big(\boldsymbol{1}\T\mathcal{G}_{i,u}(\mathbf{z}_k)-\bar{\mathcal{G}}_{i,u}(\mathbf{Q}^*)\big)\le\gamma F(k),\]
\[\liminf_{k\to\infty}\frac{1}{N}\big(\boldsymbol{1}\T\mathcal{G}_{i,u}(\mathbf{z}_k)-\bar{\mathcal{G}}_{i,u}(\mathbf{Q}^*)\big)\ge\gamma f(k),\]
where 
\[  F(k)=\max_{i,u}\Big(\frac{1}{N}\boldsymbol{1}\T\mathbf{z}_{i,u}(k)-Q_{i,u}^*\Big),\]
\[  f(k)=\min_{i,u}\Big(\frac{1}{N}\boldsymbol{1}\T\mathbf{z}_{i,u}(k)-Q_{i,u}^*\Big).\]
Let
\begin{multline}
\label{W-bound}
W_{i,u}(k)=-\frac{b}{a_k}\frac{1}{N}\boldsymbol{1}\T L^k_{i,u}\mathbf{z}_{i,u}(k)\\
=-\frac{b}{N}\boldsymbol{1}\T L^k_{i,u}\frac{1}{a_k}\Big(\mathbf{z}_{i,u}(k)-\boldsymbol{1}\mathbf{q}_{i,u}\T(k)\mathbf{z}_{i,u}(k)\Big),  
\end{multline}
where we have used the fact that $L^k_{i,u}\boldsymbol{1}=0$. From (\ref{residue}), 
\begin{eqnarray*}
&&\!\!\!\!\!\!\!\!\!\!\!\!\!\!\!\!\frac{1}{a_k}\left(\mathbf{z}_{i,u}(k)-\boldsymbol{1}\mathbf{q}_{i,u}\T(k)\mathbf{z}_{i,u}(k)\right)\nn\\
&=&\!\!\!\!\frac{\Phi_{i,u}(k-1,0)-\boldsymbol{1}\mathbf{q}_{i,u}\T(0)}{a_k}\mathbf{z}_{i,u}(0)\nn\\
&&\!\!\!\!+\frac{a_{k-1}}{a_k}(\boldsymbol{1}\mathbf{q}_{i,u}\T(k)-I)\bar{\mathbf{d}}_{i,u}(\mathbf{z}_{k-1})\nn\\
&&\!\!\!\!-\sum_{r=0}^{k-2}\frac{a_r(\Phi_{i,u}(k-1,r+1)-\boldsymbol{1}\mathbf{q}_{i,u}\T(r+1))}{a_k}\bar{\mathbf{d}}(\mathbf{z}_r).
\end{eqnarray*}
It is implied in \cite{vaidya2012matrix} that $\Phi_{i,u}(k,s)$ converges to $\boldsymbol{1}\mathbf{q}_{i,u}\T(s)$ exponentially fast. Since $\sum_{k\ge 0}a_k=\infty$, the convergence speed of $a_k$ is much slower than the exponential convergence speed. Then $\lim_{k\to\infty}\frac{a_{s-1}(\Phi_{i,u}(k-1,s)-\boldsymbol{1}\mathbf{q}_{i,u}\T(s))}{a_k}=0$, $\forall s\in[0,k-1]$. Note that $\lim_{k\to\infty}\frac{a_{k-1}}{a_k}=1$. From (\ref{W-bound}),
\begin{eqnarray*}
&&\!\!\!\!\!\!\!\!\!\!\!\!\!\!\!\!\lim\limits_{k\to\infty}W_{i,u}(k)\\
&=&\!\!\!\!-\frac{b}{N}\boldsymbol{1}\T\lim\limits_{k\to\infty}L^k_{i,u}(\boldsymbol{1}\mathbf{q}_{i,u}\T(k+1)-I)\bar{\mathbf{d}}_{i,u}(\mathbf{z}_{k})\\
&=&\!\!\!\!-\frac{b}{N}\boldsymbol{1}\T\lim_{k\to\infty}L^k_{i,u}\bar{\mathbf{d}}_{i,u}(\mathbf{z}_{k})\\
&=&\!\!\!\!-\frac{b}{N}\boldsymbol{1}\T\lim_{k\to\infty}L^k_{i,u}
(\mathbf{z}_{i,u}(k)-\mathcal{G}_{i,u}(\mathbf{z}_{k}))\\
&=&\!\!\!\!-\frac{b}{N}\boldsymbol{1}\T\lim_{k\to\infty}L^k_{i,u}
\mathcal{G}_{i,u}(\mathbf{z}_{k})\\
&=&\!\!\!\!\lim_{k\to\infty}\frac{b}{N}\sum_{v_l\in\mathcal{N}_n^k}(c_l(i,u)-c_n(i,u)).
\end{eqnarray*}
Note that
\begin{eqnarray*}
&&\!\!\!\!c_l(i,u)-c_n(i,u)\ge(1-\gamma)\min_{j,v}(Q_{j,v}^{l*}-Q_{j,v}^{n*}),\\
&&\!\!\!\!c_l(i,u)-c_n(i,u)\le(1-\gamma)\max_{j,v}(Q_{j,v}^{l*}-Q_{j,v}^{n*}).
\end{eqnarray*}
Let $M_{j,v}=\max\limits_{v_n\in\mathcal{V}}Q_{j,v}^{n*}$ and $m_{j,v}=\min\limits_{v_n\in\mathcal{V}}Q_{j,v}^{n*}$. Then,
\[\limsup\limits_{k\to\infty}W_{i,u}(k)\le(1-\gamma)\max_{j,v}(M_{j,v}-m_{j,v}),\]
\[\liminf\limits_{k\to\infty}W_{i,u}(k)\ge(1-\gamma)\min_{j,v}(m_{j,v}-M_{j,v}).\]
From (\ref{eq:error}), we obtain
\[F(k+1)\le(1-a_k(1-\gamma))F(k)+a_k(1-\gamma)\max_{j,v}(M_{j,v}-m_{j,v}),\]
\[f(k+1)\ge(1-a_k(1-\gamma))f(k)+a_k(1-\gamma)\min_{j,v}(m_{j,v}-M_{j,v}).\]
By Proposition 4.1 in \cite{kar2013cal}, we have
\begin{equation}
\label{F}
    \limsup_{k\to\infty}F(k)\le\max_{j,v}(M_{j,v}-m_{j,v}).
\end{equation}
By Lemma \ref{lem lower bound}, we have
\begin{equation}
\label{f}
    \liminf_{k\to\infty}f(k)\ge\min_{j,v}(m_{j,v}-M_{j,v}).
\end{equation}
Equations (\ref{F}) and (\ref{f}) imply
\[ \limsup_{k\to\infty}\Big|\frac{1}{N}\boldsymbol{1}\T\mathbf{z}_{i,u}(k)-Q_{i,u}^*\Big|\le \max\limits_{v_n,v_l\in\mathcal{V}}\|\mathbf{Q}^{n*}-\mathbf{Q}^{l*}\|_\infty.\]
Since $z_{i,u}^n(k)$, $\forall v_n$ reach consensus  as $k\to\infty$, the above inequality further implies
\[ \limsup_{k\to\infty}|z_{i,u}^n(k)-Q_{i,u}^*|\le\max\limits_{v_n,v_l\in\mathcal{V}}\|\mathbf{Q}^{n*}-\mathbf{Q}^{l*}\|_\infty=R.\]
Note that $\bar{Q}_{i,u}^n(t)$ is a piecewise constant interpolation of $z_{i,u}^n(k)$ and $\bar{Q}_{i,u}^n(t)=\mathbb{E}[Q_{i,u}^n(t)|\mathcal{F}_t]$. We have \[\mathbb{P}\Big(\limsup_{t\to\infty}|Q_{i,u}^n(t)-Q_{i,u}^*|\le R\Big)=1,\] \ie, \[\mathbb{P}\Big(\limsup\limits_{t\to\infty}\|\mathbf{Q}_t^n-\mathbf{Q}^*\|_\infty\le R\Big)=1.\]
From equation (\ref{eq:V}), \[\max_{i}|V_{i}^n(t)-V_{i}^*|\le\max_{i,u}|Q_{i,u}^n(t)-Q_{i,u}^*|\le R,\] 
\ie, \[\mathbb{P}\Big(\limsup\limits_{t\to\infty}\|\mathbf{V}_t^n-\mathbf{V}^*\|_\infty\le R\Big)=1.\]

If $|Q_{i,u}^*-Q_{i,v}^*|\ge 2R,\;u,v\in\mathcal{U}$, the set $(Q_{i,u}^{*}+R,Q_{i,u}^{*}-R)$ and the set $(Q_{i,v}^{*}+R,Q_{i,v}^{*}-R)$ do not overlap. Thus, 
$\argmin_{v}Q_{i,u}^n(t)=\argmin_{v}Q_{i,u}^{*}$ as $t\to\infty$, which indicates that each agent can learn the optimal policy $\pi^*$.

For each $v_n\in\mathcal{V}$, define $F^n(k)=\max_{i,u}(z_{i,u}^n(k)-Q_{i,u}^{n*})$ and $f^n(k)=\min_{i,u}(z_{i,u}^n(k)-Q_{i,u}^{n*})$. Following the similar analysis as $F(k)$ and $f(k)$,
we can prove that 
\[\limsup_{k\to\infty}\left(z_{i,u}^n(k)-Q_{i,u}^{n*}\right)\le\max_{j,v}(\max_{v_l}Q_{j,v}^{l*}-Q_{j,v}^{n*}),\]
\[\liminf_{k\to\infty}\left(z_{i,u}^n(k)-Q_{i,u}^{n*}\right)\ge\min_{j,v}(\min_{v_l}Q_{j,v}^{l*}-Q_{j,v}^{n*}).\]
Since $\max_{v_l}Q_{j,v}^{l*}\le M$ and $\min_{l}Q_{j,v}^{l*}\ge m$, we obtain 
\[\limsup_{k\to\infty}z_{i,u}^n(k)\le M-\max_{j,v}Q_{j,v}^{n*}+Q_{i,u}^{n*}\le M,\]
\[\liminf_{k\to\infty}z_{i,u}^n(k)\ge m-\min_{j,v}Q_{j,v}^{n*}+Q_{i,u}^{n*}\ge m,\]
which indicate
\[\mathbb{P}(\limsup\limits_{t\to\infty}Q_{i,u}^n(t)\le M)=1,\;\mathbb{P}(\liminf\limits_{t\to\infty}Q_{i,u}^n(t)\ge m)=1.\]

\end{document}